\newcommand{\mytitle}{Learning Koopman Models From Data Under General Noise Conditions}
\definecolor{blue_def}{HTML}{1f77b4}
\definecolor{orange_def}{HTML}{ff7f0e}
\definecolor{green_def}{HTML}{2ca02c}
\definecolor{red_def}{HTML}{d62728}
\setlist[enumerate]{leftmargin=.5in}
\setlist[itemize]{leftmargin=.5in}
\crefname{hypothesis}{Hypothesis}{Hypotheses}
\crefname{fact}{Fact}{Facts}
\headers{\mytitle}{L. C. Iacob, M. Sz\'ecsi, G. I. Beintema, M. Schoukens, and R. T\'oth}
\title{\mytitle\thanks{Submitted to the editors on the 24\textsuperscript{th} of May, 2025. This paper extends \textit{Deep Identification of Nonlinear Systems in Koopman Form}, which has appeared in the Proceedings of the 60th IEEE Conference on Decision and Control, CDC, 2021 
\funding{This work was funded by the European Union (ERC, COMPLETE, 101075836). 
The research was also supported by the European Union within the framework of the National Laboratory for Autonomous Systems (RRF-2.3.1- 21-2022-00002) and by the Air Force Office of Scientific Research under award number FA8655-23-1-7061. Views and opinions expressed are however those of the author(s) only and do not necessarily reflect those of the European Union or the European Research Council Executive Agency. Neither the European Union nor the granting authority can be held responsible for them.}}}
\author{Lucian Cristian Iacob\thanks{ Control System Group, Dept.~of Electrical Engineering, Eindhoven Technical University, The Netherlands 
  (\email{l.c.iacob@tue.nl}, \email{g.i.beintema@tue.nl}, \email{m.schoukens@tue.nl}, \email{r.toth@tue.nl}).}
\and M\'at\'e Sz\'ecsi\thanks{Systems and Control Laboratory, 
HUN-REN Institute for Computer Science and Control, Hungary 
  (\email{szecsi.mate@sztaki.hun-ren.hu}, \email{toth.roland@sztaki.hun-ren.hu}).}
\and Gerben Izaak Beintema\footnotemark[2]
\and \newline Maarten Schoukens\footnotemark[2]
\and Roland T\'oth\footnotemark[2] \textsuperscript{,}\footnotemark[3]}
\newcommand{\rev}[1]{{\color{black}#1}} 
\newcommand{\revtwo}[1]{{\color{black}#1}} 
\newcommand{\Koop}{\EuScript{K}}
\newlist{enumsteps}{enumerate}{1}
\setlist[enumsteps,1]{label=\bfseries Contribution \arabic*: }
\newtheorem{assumption}{Assumption}
\newtheorem{condition}{Condition}
\begin{document}

\maketitle

\begin{abstract}
This paper presents a novel identification approach of Koopman models of nonlinear systems with inputs under rather general noise conditions. The method uses deep state-space encoders based on the concept of state reconstructability and an efficient multiple-shooting formulation of the squared loss of the prediction error to estimate the dynamics and the lifted state \revtwo{only} from input-output data. Furthermore, the Koopman model structure includes an innovation noise term that is used to handle process and measurement noise. It is shown that the proposed approach is statistically consistent \revtwo{(estimation error tends to zero when the number of data points goes to infinity)} and computationally efficient due to the multiple-shooting formulation, \revtwo{by which prediction error of the model} can be calculated  \revtwo{on multiple subsections of the data in parallel}. The latter allows for efficient batch optimization of the network parameters and, at the same time, excellent long-term prediction capabilities of the obtained models. The performance of the approach is illustrated by nonlinear benchmark examples \rev{and experimental \revtwo{data from a Crazyflie 2.1} quadcopter}.
\end{abstract}

\begin{keywords}
Koopman methods, nonlinear dynamical systems, data-driven modeling, system identification
\end{keywords}

\begin{MSCcodes}
37M99, 47B33, 65P99 93B07, 93B15, 93B30
\end{MSCcodes}

\section{Introduction}
Due to the continuously increasing performance expectations for dynamical systems in engineering, nonlinear behavior in many application areas started to become dominant, requiring novel methods that can stabilize and shape the performance of these systems with the ease of conventional approaches that have been developed for linear time-invariant systems.  Hence, recent years have seen a strong push to find global linear embeddings of nonlinear systems to simplify, among others, analysis, prediction and control. One such embedding technique is based on the Koopman framework, where the concept is to lift the nonlinear state space to a (possibly) infinite-dimensional \revtwo{linear} space through the so-called \emph{observable functions}. The dynamics of the original system are preserved and governed by a linear Koopman operator, enabling the representation of the system dynamics via a linear dynamical description \cite{brunton_overview}, \cite{book_koopman} \rev{(for a more in depth-overview of the history of Koopman operator theory and the state-of-the-art see \cite{Mezic_art_21})}. In practice, by choosing a dictionary of a finite number of observables a priori to construct time-shifted data matrices, linear Koopman-based models have been commonly obtained using simple least-squares estimation \cite{Mauroy_sysid}. One such approach, called \emph{dynamic mode decomposition} (DMD) \cite{rowley_dmd}, is based on constructing time-shifted data matrices using directly measured state variables associated with the system. If the dictionary consists of nonlinear functions of the state, this technique is known as \emph{extended DMD} (EDMD) \cite{williams_edmd}. However, among many challenges related to statistical consistency, availability of state-measurements, etc., the main difficulty with these powerful methods lies in choosing a finite number of lifting functions such that, in the lifted state-space, a \emph{linear time-invariant} (LTI) model exists that can capture well the dynamic behavior of the original nonlinear system. While there exist methods for the automatic selection of the observables (see \cite{Sindy}, \cite{yeung_nn}), they still rely on an a priori choice of a dictionary of functions, which many times are difficult to select and even characterize the resulting approximation error by them.
\par To circumvent this, a viable approach has been found in learning the lifting functions from data by the use of machine learning methods such as \rev{Gaussian} processes \cite{Lian_gp}, kernel-based methods \cite{kernel_bounds}, \cite{kernel_edmd}, or various forms of 
\emph{Artificial Neural Networks} (ANNs) \cite{lusch_nn,otto_bilinear,Otto,deepkoco_nn}. Due to their flexibility in describing multiple model structures, applicability to large datasets, many successful applications of these methods have been reported in the literature to obtain accurate and compact Koopman models in practice.  \par 
However, a common drawback of learning-based methods together with the (E)DMD approaches is (i) the assumption of full-state measurement (e.g. \cite{lusch_nn}, \cite{Otto}), which is rarely the case in engineering applications. Some works such as \cite{Ketthong} and \cite{yeung_nn} do address partial state observations, either by only lifting the output \cite{yeung_nn} or by implementing a DMD version that uses time-delayed measurements \cite{Ketthong}. In a different approach, \cite{otto_bilinear} employs a Kalman filter to estimate the lifted state. Nevertheless, a systematic framework for addressing partial state measurements in the context of data-driven Koopman modelling is still lacking at large.
Furthermore, despite the powerful capabilities of these approaches that have been demonstrated in multiple examples, generally (ii) little consideration of measurement or process noise is taken, which can lead to serious bias of the models when applied in real-world applications. Only a few papers present examples where measurement noise is even present in the data (e.g. \cite{Jiang}, \cite{takeishi_noise}) and often only the robustness of the methods is analyzed (e.g. \cite{Sinha}). While there are works that add process noise directly to the lifted representation (e.g. \cite{otto_bilinear}), the way noise enters the Koopman model is merely an assumption. As such, ensuring statistical guarantees of consistency of the estimators remains an open question in the literature.  
A third important issue is that (iii) the estimation of Koopman models for systems with inputs has only recently been investigated, either through a nonlinear lifting \cite{bonnert_nn} or by using state- and input-dependent observables, together with input increments \cite{deepkoco_nn}. 
However, this often leads to models \rev{that have limited applicability}, as it is more difficult to analyze dynamical aspects of the system or to design controllers to regulate the behavior compared to other model classes. \rev{Alternatively}, due to their simple structure, multiple works assume a fully LTI Koopman model (e.g. \cite{Korda}, \cite{Mamakoukas}) or, lately, bilinear (e.g. \cite{Bruder,otto_bilinear,Schulze}). However, it is still largely unexplored how the approximation capability of these model structures in a finite-dimensional setting compares to using more complex \rev{input structures such as control affine or nonlinear in both state and input,} as given in \cite{CDC_Id_Iacob,Aut_Iacob_inputs,Shi_koopman}. \revtwo{Lastly, (iv) while the Koopman framework has been successfully applied to modeling highly complex nonlinear systems, a thorough understanding of the induced approximation error due to the use of finite-dimensional models (projection error) learned on finite data (estimation error) has been lacking. First results for deterministic systems using EDMD have been obtained in \cite{mezic_on_nr_approx} and \cite{zhang_qunatitative}, providing finite data error bounds for ergodic sampling \cite{mezic_on_nr_approx} and i.i.d sampling \cite{zhang_qunatitative}. For finite dictionaries, \cite{zhang_qunatitative} provides projection bounds using a finite element approach. In an extension to systems with inputs, using bilinear models constructed under the assumption of constant input values, \cite{finite_data_err_bounds} provides finite-data error bounds and \cite{schaller_proj_bounds} describes projection error bounds. To circumvent the manual choice of observables, multiple works focus on kernel EDMD \cite{kohne_kedmd_bound}, \cite{kernel_bounds}, \cite{analysis_kernel_edmd}, to provide a characterization of the full approximation error (including both projection and estimation errors), for the autonomous case, and prove convergence in the infinite data limit \cite{kohne_kedmd_bound}. Note that, however, the projection component of the error is usually addressed by assuming the invariance of the Koopman subspace under the chosen kernel function (Wendland \cite{kohne_kedmd_bound}, RBF \cite{kernel_bounds}). There have been extensions to systems with input, either under the assumptions of a bilinear lifted model and specific choice of kernel basis \cite{straesser_bilinear_err_bounds_ieee}, or by approximating a lifted representation with an affine dependency on the input to derive probabilistic error bounds \cite{analysis_kernel_edmd}.
To the best of the authors' knowledge, there are only few results for the noisy setting (\cite{finite_data_err_bounds} describes bounds for stochastic systems and \cite{llamazareselias} proves convergence of autonomous EDMD for noisy measurements). Furthermore, as previously mentioned, full state availability is assumed (not only input/output data) for these methods and the employed data sampling assumptions are  difficult to meet in practice. Moreover, while recent works have focused on kernel basis \cite{kohne_kedmd_bound}, \cite{kernel_bounds}, \cite{straesser_bilinear_err_bounds_ieee}, to name a few, there has been a lack of focus on ANN-based identification of Koopman models even though powerful empirical results have been historically obtained in this setting \cite{lusch_nn}, \cite{otto_bilinear}.}

\par To overcome challenges (i)-\revtwo{(iv)}, we introduce a flexible Koopman model learning method under control inputs, partial measurements, and with statistical guarantees of consistency under process and measurement noise. For this purpose, a deep-learning-based state-space encoder approach is proposed, which is implemented in the   
deepSI toolbox\footnote{deepSI toolbox available at https://github.com/MaartenSchoukens/deepSI} in Python. The main advantages of the approach together with our contributions\footnote{The present paper extends the conference paper \cite{CDC_Id_Iacob} in terms of introducing an innovation noise structure in the Koopman model to handle process and measurement noise, proving the consistency of the estimator, studying various lifted structures for control inputs and providing extensive analysis and testing of the capabilities of the method on benchmarks and real-world data.} are as follows:
\begin{itemize}
\item Analytic derivation of an exact Koopman model \revtwo{structure (both in infinite and finite dimensional form) for nonlinear systems} with control inputs and innovation noise, \revtwo{which} can  \revtwo{represent both} measurement and process  noise \revtwo{in the original system}; 
\item \revtwo{Introducing a} deep-ANN based encoder function \revtwo{which, by learning the reconstruct\-ability map of the Koopman model, can directly provide an} estimate \revtwo{of} the lifted state using \revtwo{only} input-output data (allows 
\revtwo{applicability in case of} partial state measurements);
	\item \revtwo{Proposing a} computationally efficient 
    multiple-shooting
    -based
    deep-learning \revtwo{approach for} identification 
    \revtwo{of the introduced} Koopman models;
    \item \revtwo{Showing statistical consistency of the Koopman identification method under general noise conditions, meaning that  the estimation error decreases to zero asymptotically for the identified models with probability 1;}
	\item Comparative study of Koopman model estimation with input structures of different complexities (linear, bilinear, input affine, general);
\end{itemize}
\par The paper is structured as follows. Section \ref{section_behavior} details the general Koopman framework and we discuss the notions of observability and state reconstructability in the Koopman form. The proposed Koopman encoder, the addition of input and the innovation-type model structure are discussed in Section \ref{sec:section_identification} together with the proposed deep-learning-based approach for the estimation of the models. Section \ref{sec:convergence_and_consistency} discusses the convergence and consistency properties of the estimator. In Section \ref{section_experiments}, the approach is tested \revtwo{and analyzed} on \revtwo{the} Wiener-Hammerstien and \revtwo{the} Bouc-Wen benchmarks used in data-driven modeling and on experimental data obtained from a Crazyflie 2.1 quadcopter. 
The conclusions are presented in Section \ref{sec:section_conclusion}.
\section{Preliminaries}\label{section_behavior}
This section introduces the core concept of the Koopman framework and describes the embedding of nonlinear systems in the solution set of linear representations. We show that, while the behavior of the system can be represented using a linear form, a nonlinear constraint still needs to be satisfied on the initial conditions to ensure a one-to-one mapping between the solution sets. Based on this result, we explore observability and constructability concepts in the original and lifted forms, for both autonomous and input-driven systems. 
\subsection{Koopman embedding of nonlinear systems}
First, for the sake of simplicity, consider a discrete-time nonlinear autonomous system:
\begin{equation}\label{eq:nl_aut}
x_{k+1}=f(x_k),
\end{equation}
with $x_k\in\mathbb{R}^{n_\mathrm{x}}$ being the state variable, $f:\mathbb{R}^{n_\mathrm{x}}\rightarrow \mathbb{R}^{n_\mathrm{x}}$ is a bounded nonlinear state transition map and $k\in\mathbb{Z}$ is the discrete time. The initial condition is denoted by $x_0\in\mathbb{X}\subseteq\mathbb{R}^{n_\mathrm{x}}$ and we assume that $\mathbb{X}$ is forward invariant under $\rev{f(\cdot)}$, i.e., $f(\mathbb{X})\subseteq \mathbb{X}$, see \cite{Aut_Iacob_inputs}. The Koopman framework uses observable functions $\phi\in\mathcal{F}$ to lift the system  \eqref{eq:nl_aut} to a higher dimensional space with linear dynamics. These observables $\phi:\mathbb{X}\rightarrow \mathbb{R}$ are scalar functions (generally nonlinear) and are from a Banach function space $\mathcal{F}$. As described in \cite{book_koopman}, the Koopman operator $\Koop:\mathcal{F}\rightarrow\mathcal{F}$ associated with \eqref{eq:nl_aut} is defined through:
\begin{equation}\label{eq:koop_composition}
\Koop\phi=\phi\circ f, \quad \forall \phi \in \mathcal{F},
\end{equation}
where $\circ$ denotes function composition and \eqref{eq:koop_composition} is equal to:
\begin{equation}\label{eq:koop_composition_2}
    \Koop\phi(x_k)=\phi(x_{k+1}).
\end{equation}
Although the Koopman framework typically requires $\mathcal{F}$ to be spanned by an infinite number of basis functions to fully describe the dynamics of \eqref{eq:nl_aut}, for practical use, an $n_\mathrm{f}$-dimensional linear subspace $\mathcal{F}_{n_\mathrm{f}}\subset \mathcal{F}$ is considered, with $\mathcal{F}_{n_\mathrm{f}}=\mathrm{span}\left\lbrace\phi_j\right\rbrace^{n_\mathrm{f}}_{j=1}$. The finite-dimensional approximation of the Koopman operator $\Koop$ can be described using the projection operator $\Pi:\mathcal{F}\rightarrow\mathcal{F}_{n_\mathrm{f}}$, and is given by:
\begin{equation}
\Koop_{n_\mathrm{f}}=\left.\Pi \Koop\right|_{\mathcal{F}_{n_\mathrm{f}}}: \mathcal{F}_{n_\mathrm{f}} \rightarrow \mathcal{F}_{n_\mathrm{f}}.
\end{equation}
In practice, the Koopman matrix representation $A\in\mathbb{R}^{n_\mathrm{f}\times n_\mathrm{f}}$ is commonly used \cite{book_koopman}:
\begin{equation} \label{eq:comp:A}
\Koop_{n_\mathrm{f}} \phi_{j} = \sum_{i=1}^{n_\mathrm{f}} A_{j,i} \phi_{i}.
\end{equation}
\revtwo{Let $\Phi=[ \begin{array}{ccc}
\phi_1 & \cdots &  \phi_{n_\mathrm{f}}
\end{array}]^\top$. Then, the finite-dimensional Koopman operator $\Koop_{n_\mathrm{f}}$ provides an exact embedding of \eqref{eq:nl_aut} (there is no projection error w.r.t. $\Koop$) if the following invariance condition is satisfied:
\begin{equation}
    A\Phi(x)\in\text{span}\{\Phi\}
\end{equation}
or equivalently $\Phi \circ f \in \text{span}\{\Phi\}$.} Note that there exist classes of systems that admit \revtwo{such an} exact finite-dimensional embedding.
\rev{For example, \cite{Iacob_poly_system} \revtwo{describes the embedding of polynomial systems in a lower triangular form}, building on the well-known example discussed in \cite{Brunton_2016}, \rev{or \cite{otto_bilinear}}.  
\revtwo{Alternatively}, \cite{Iacob_CT_block_embedding} develops an exact \revtwo{finite-dimensional} embedding algorithm for \revtwo{systems that can be represented as a network} interconnection \revtwo{of LTI filter blocks and static polynomial nonlinearities.}  
\revtwo{Furthermore}, \cite{Levine_86,Wang_23,Wong_83} \revtwo{show that specific} classes of nonlinear systems admit an exact \revtwo{finite-dimensional} embedding via immersion.} \revtwo{However, for many systems, using a finite dictionary $\Phi$ can lead to inexact embeddings. The resulting projection error has been treated in works such as \cite{schaller_proj_bounds} or \cite{zhang_qunatitative}.}

We next introduce the lifted state $z_k = \Phi(x_k)$. 
The lifted finite dimensional linear representation of \eqref{eq:nl_aut} is then given by:
\begin{equation}\label{eq:koop_aut}
z_{k+1}=Az_k. 
\end{equation}
The main challenge of the Koopman framework is the selection of the observables, including their number, to obtain a suitable approximation in terms of an appropriate norm (or an exact embedding) of the nonlinear system \cite{book_koopman}. Additionally, it is often not clearly stated in the literature that a linear system whose dynamics are governed by the Koopman matrix $A$ is only equivalent in terms of behavior (collections of all solution trajectories) to the original nonlinear system \eqref{eq:nl_aut} if explicit nonlinear constraints are imposed on the initial condition of the lifted state, i.e., equivalent trajectories are only part of a manifold in the extended solution space. We explore this further through a simple example. 
\subsection{Linear representations subject to nonlinear constraints}\label{sec_2b}
To illustrate the concept, we consider the following polynomial system represented by \eqref{eq:nl_aut}, similar to the \rev{well-studied example} described in \cite{Brunton_2016}:
\begin{equation}\label{eq:example_aut}
\begin{bmatrix}
x_{k+1,1} \\x_{k+1,2}
\end{bmatrix}=\begin{bmatrix}
ax_{k,1} \\ bx_{k,2} - cx^2_{k,1}
\end{bmatrix}
\end{equation}
where $a,b,c\in\mathbb{R}$ are constant parameters and $x_{k,i}$ denotes the $i^{\text{th}}$ element of $x_k$. By considering solutions of \eqref{eq:example_aut} only on $[0,\infty )$ with initial condition $x_0\in\mathbb{R}^{2}$, the feasible trajectories are given by:
\begin{equation}
\mathcal{B} =\left\lbrace x:\mathbb{Z}^+_0 \rightarrow\mathbb{R}^2\mid \text{s.t. \eqref{eq:example_aut} is satisfied}\right\rbrace.
\end{equation}
To obtain the Koopman form, the following observables are chosen: $\phi_1(x_k) = x_{k,1}$, $\phi_2(x_k) = x_{k,2}$ and $\phi_3(x_k) = x^2_{k,1}$, which give the equivalent lifted form:  
\begin{equation}\label{eq:lifted_phi_ex}
\Phi(x_{k+1})=\underbrace{\begin{bmatrix}
a & 0 & 0 \\ 0 & b & -c \\ 0 & 0 & a^2
\end{bmatrix}}_A \Phi(x_k). 
\end{equation} 
Based on \eqref{eq:lifted_phi_ex}, consider the system $z_{k+1}=Az_k$ of dimension $n_\mathrm{z}=3$ and with $z_0\in\mathbb{R}^3$. The solution set is described as: 
\begin{equation}\label{eq:lifted_z_set}
\mathcal{B}_\Koop=\left\lbrace z : \mathbb{Z}^+_0\rightarrow\mathbb{R}^3\mid \text{s.t. }z_{k+1}=Az_k\right\rbrace.
\end{equation}
Note that \eqref{eq:lifted_z_set} represents an unrestricted LTI behavior. It is easy to show that $\Phi(\mathcal{B})\subseteq\mathcal{B}_\Koop$, as any $z_k\in\mathcal{B}_{\Koop}$ with $z_0\in\mathbb{R}^3$ for which $z_{0,3}\neq z^2_{0,1}$ will not correspond to a solution of \eqref{eq:example_aut}. 
By introducing the constraint $\Psi:\mathbb{R}^3\rightarrow\mathbb{R},\Psi(z_k)=z^2_{k,1}-z_{k,3}$, the solution set \eqref{eq:lifted_z_set} \revtwo{becomes} 
\begin{equation}
\hat{\mathcal{B}}_{\Koop}=\left\lbrace z : \mathbb{Z}^+_0\rightarrow\mathbb{R}^3\mid\text{s.t. }z_{k+1}=Az_k,\;\Psi(z_0)=0\right\rbrace .
\end{equation}
Then, it is possible to show that $\Phi(\mathcal{B})=\hat{\mathcal{B}}_\Koop$ holds. To illustrate this, Fig.~\ref{fig:compliant_embedding} shows the simulated trajectories of system  \eqref{eq:lifted_z_set} with $a=0.99$, $b=0.9$ and $c=0.9$ and the constraint $\Psi$, which we call the \emph{compliant surface}.  As can be seen in Fig.~\ref{fig:compliant_embedding}, only solutions (in green) starting on the compliant surface remain on the compliant surface and correspond to solutions (in black) of the original nonlinear system \eqref{eq:example_aut}. 
\rev{This example} highlights the need for additional constraints on the Koopman form, or, as we now call it, the embedding of \eqref{eq:nl_aut}, to guarantee a bijective relationship between the solution sets.  

\par 
Note that, when $x_0$ is known and the observable set $\Phi$ is given, this nonlinear condition on the lifted states is alternatively defined by $z_0=\Phi(x_0)$. \rev{This approach has been extensively used in the Koopman literature in both theoretical and application oriented works, see e.g. \cite{bonnert_nn,Cisneros20,Mauroy_sysid,surana_obs} and, for a discussion on the connection between the nonlinear and lifted manifolds for systems of the type \eqref{eq:example_aut}, one can consult \cite{Brunton_2016}.} However, in an identification setting where information on $x_0$ is not available or only partially available \rev{(in contrast to explicitly lifting the state via $\Phi(x_0)$ as done in \cite{Brunton_2016})}, to construct a lifted model with the constrained solution set, one needs to include the $\Psi(z_0)=0$ condition. Next, we explore observability and reconstructability of $z$ in view of \rev{this discussion.}

\begin{figure}[ht!]
  \centering
\includegraphics[scale=0.7]{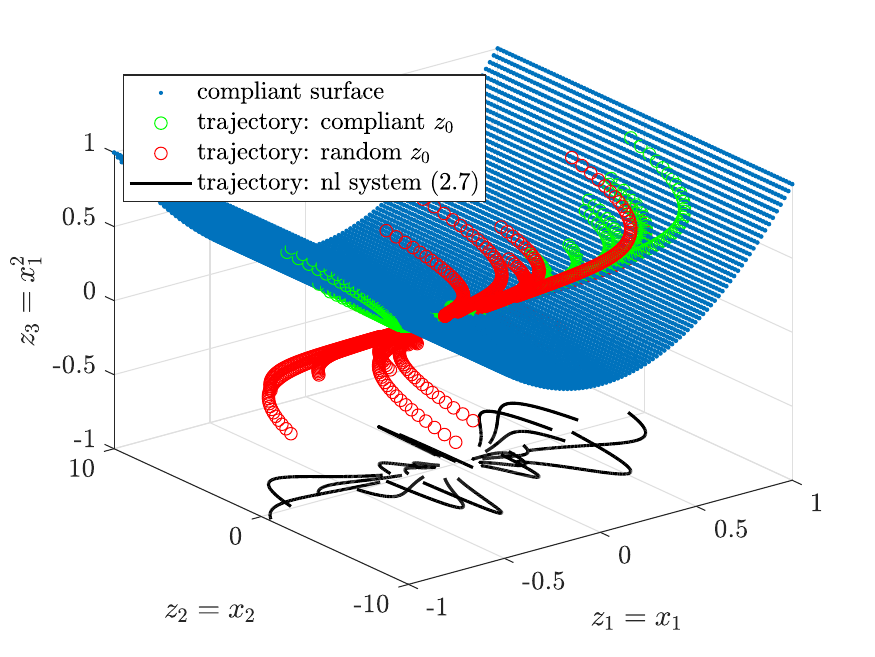} 
  \caption{Compliant surface corresponding to $\Psi$ (in blue), compliant trajectories of the lifted system (in green), non-compliant trajectories (in red) of the lifted system and trajectories of the original nonlinear system (in black).}\vspace{-1cm} \label{fig:compliant_embedding}
\end{figure}
\subsection{Observability and reconstructability}\label{sec:obs_and_reconstr_aut}
Consider the system defined by \eqref{eq:nl_aut} with output 
\begin{equation}\label{eq:nl_out}
    y_k = h(x_k),
\end{equation}
where $h:\mathbb{R}^{n_\mathrm{x}}\rightarrow\mathbb{R}^{n_\mathrm{y}}$ is a nonlinear output map. Given $x_0\in\mathbb{R}^{n_\mathrm{x}}$, the \rev{output map} for the state $x$ associated with the nonlinear dynamics represented by \eqref{eq:nl_aut} and \eqref{eq:nl_out} is: 
\begin{equation}\label{eq:first_obs_map_nl_aut}
\mathcal{O}_{\mathrm{x},n}(x_0)=\begin{bmatrix}
h(x_0) \\ h\circ f(x_0) \\ \vdots \\ h\circ_{n-1}f(x_0)
\end{bmatrix}=\begin{bmatrix}
y_0 \\ y_{1} \\ \vdots \\ y_{n-1}
\end{bmatrix}
\end{equation}
where $h\circ_2 f(x_0)=h\circ f \circ f(x_0)$ and $h\circ_n f(x_0)=h\circ f \circ_{n-1} f(x_0)$ for $n>2$. Let $y^{n-1}_0=[\begin{array}{ccc}
y^\top_0 & \cdots & y^\top_{n-1}
\end{array}]^\top$. As described in \cite{nl_obs_1982}, the representation satisfies the so-called
observability rank condition at $x_0$ if, for $n= n_\mathrm{x}$, the rank of the Jacobian of $\mathcal{O}_{\mathrm{x},n}$ at $x_0$ is $n_\mathrm{x}$. 
If this condition is met, the representation is strongly locally observable at $\mathbb{X}_0$, where $\mathbb{X}_0$ is a neighborhood of $x_0$, and  
$\mathcal{O}_{\mathrm{x},n}$ is a diffeomorphism, i.e., it is
invertible, on $\mathbb{X}_0$ \cite{Isidori}. We denote its inverse as $\Lambda_{\mathrm{x},n}:\mathbb{R}^{n n_\mathrm{y}}\rightarrow \mathbb{X}_0\subseteq \mathbb{R}^{n_\mathrm{x}}$, such that
 \begin{equation} \label{inv:obs}
 	x_0=\Lambda_{\mathrm{x},n} (\mathcal{O}_{\mathrm{x},n}(x_0))=\Lambda_{\mathrm{x},n}(y^{n-1}_0),
\end{equation}  
for all $x_0\in\mathbb{X}_0$. Note that just like in the LTI case, if this property is satisfied for $n= n_\mathrm{x}$, then (i) the rank condition can be satisfied for $ n_\mathrm{x} \geq n\geq 1$ and the minimal $n$ for which it holds is called the lag $n_\ast$ of the system at $x_0$; (ii) the rank of the Jacobian does not change for $n\geq n_\mathrm{x}$; (iii) hence, the existence of \eqref{inv:obs} is ensured for any $n\geq n_\ast$.

\rev{Throughout the paper, we call $\Lambda_{\mathrm{x},n}$ the observability map, which can be used to} compute the initial condition $x_0$ from future values of the output $y$. Conversely, the reconstructability concept is used to calculate the initial condition from past values of the output. For $n\geq 1$, it holds that
\begin{equation}\label{eq:nl_interm_reconstr_aut}
	x_0 = \circ_{n-1}f(x_{-n+1}),
\end{equation}
where $\circ_0 f(x_0) =x_0$. 
If, for the given $n$, $\Lambda_{\mathrm{x},n}$ exists, then, using \eqref{eq:nl_interm_reconstr_aut}, we have:
\begin{equation}
	x_0 = \circ_{n-1}f(\Lambda_{\mathrm{x},n}(y^0_{-n+1}))\rev{=:}\Pi_{\mathrm{x},n}(y^0_{-n+1})
\end{equation}
for all $x_0\in\mathbb{X}_0$. The function $\Pi_{\mathrm{x},n}:\mathbb{R}^{n n_\mathrm{y}}\rightarrow\mathbb{X}_0\subseteq \mathbb{R}^{n_\mathrm{x}}$ is called the reconstructability map. \par In the Koopman form, assuming that the output function is in the span of the lifted states (or simply included in the lifting set), i.e., $y_k=Cz_k$, with $C\in\mathbb{R}^{n_\mathrm{y}\times n_\mathrm{z}}$, we construct the following map: 
\begin{equation}\label{eq:nl_alg_constr}
\begin{bmatrix}
y_0 \\ y_{1} \\ \vdots \\ y_{n-1} \\ 0
\end{bmatrix} =\begin{bmatrix}\begin{pmatrix}
C \\ CA \\ \vdots \\ CA^{n-1} \end{pmatrix}z_0 \\ \Psi(z_0) 
\end{bmatrix}:=\begin{bmatrix} \mathcal{O}^{\text{lin}}_{\mathrm{z},n} z_0\\ \Psi(z_0)\end{bmatrix}:=\mathcal{O}_{\mathrm{z},n}(z_0).
\end{equation}
In an LTI sense, the lifted system representation would be observable if there is an $n\geq 1$ such that 
$\text{rank}(\mathcal{O}^{\text{lin}}_{\mathrm{z},n})=n_\mathrm{z}$. However, as observed in Section \ref{sec_2b}, it is also necessary to consider the nonlinear constraints $\Psi:\mathbb{R}^{n_\mathrm{z}}\rightarrow\mathbb{R}^{n_\mathrm{c}}$ to ensure compliance of the initial condition $z_0$. Hence, if there is an $n\geq 1$
such that 
the Jacobian of $\mathcal{O}_{\mathrm{z},n}(z_0)$  has full rank $n_\mathrm{z}$, which implies that the mapping $\mathcal{O}_{\mathrm{z},n}$ is locally invertible on a neighborhood $\mathbb{Z}_0$ of $z_0$, then $z_0$ is uniquely determined from $y^{n-1}_0$ and the constraint $\Psi(z_0)$. Then, there exists a $\Lambda_{\mathrm{z},n}:\mathbb{R}^{n n_\mathrm{y}}\rightarrow \mathbb{Z}_0\subseteq \mathbb{R}^{n_\mathrm{z}}$, such that
\begin{equation}\label{eq:koopman_obs_map_aut}
	z_0=\Lambda_{\mathrm{z},n}(y^{n-1}_0),
\end{equation}
for all $z_0\in\mathbb{Z}_0$.
We call $\Lambda_{\mathrm{z},n}$ the observability map for autonomous Koopman models. To utilize only past data for determining $z_0$, we can also formulate \eqref{eq:koopman_obs_map_aut} in a reconstructability form. Let:
\begin{equation}\label{eq:koop_interm_reconstr_aut}
	z_0 = A^{n-1}z_{-n+1}.
\end{equation}
Using \eqref{eq:koopman_obs_map_aut} in \eqref{eq:koop_interm_reconstr_aut}:
\begin{equation}
	z_0=A^{n-1}\Lambda_{\mathrm{z},n}(y^0_{-n+1}):=\Pi_{\mathrm{z},n}(y^0_{-n+1})
\end{equation}
where $\Pi_{\mathrm{z},n}:\mathbb{R}^{nn_{\mathrm{y}}}\rightarrow\mathbb{R}^{n_\mathrm{z}}$ is the Koopman reconstructability map for autonomous systems. Note that the compliance constraint $\Psi$
is part of $\Lambda_{\mathrm{z},n}$. This gives a different point of view on reconstructability than in the work \cite{surana_obs}, where the observability notions are discussed based on an explicit definition of the lifting map, i.e., a given selection of the observables $\Phi$. Note that, for a nonlinear system representation with $n_\mathrm{x}$ states and an explicit dictionary $\Phi$, the construction $z_0=\Phi(x_0)=\Phi(\Pi_{\mathrm{x},n}(y^0_{-n+1}))$ allows to compute the initial lifted $z_0$ based on $x_0$, using a much smaller amount of lags (at max $n= n_\mathrm{x}$) as, typically, $n_\mathrm{x} \ll n_\mathrm{z}$. As such, the number of necessary lags $n$ does not depend on the dimensionality of the lifted space, but of the original nonlinear system, which drastically reduces the computational complexity.\par  
It is important to emphasize that the conditions discussed in this subsection guarantee local observability and necessary conditions for the local invertibility of \eqref{eq:first_obs_map_nl_aut} and \eqref{eq:nl_alg_constr}. However, for stronger, global guarantees, \cite{Jose_subnet} describes, albeit for a continuous time systems,  that if $f$ is Lipschitz continuous and $h$ has a finite amount of nondegenerate critical points, then $n = 2n_\mathrm{x} +1$ is sufficient to ensure global existence of the reconstructabiltiy map, which is in line with the results in \cite{Takens}.
\section{Identification approach}\label{sec:section_identification} Building on the previously discussed results, this section details the proposed Koopman model identification approach for nonlinear systems driven by an external  input and affected by process  and measurement noise.
\subsection{Data generating system}
We consider the following nonlinear system: 
\begin{subequations}\label{eq:data_gen}
\begin{align}
    x_{k+1}&=f_\mathrm{d}(x_k,u_k,e_k), \label{eq:data_gen_x}\\
    y_k &=h(x_k)+e_k, \label{eq:data_gen_y}
\end{align}
\end{subequations}
with $u_k\in\mathbb{U}\subseteq\mathbb{R}^{n_\mathrm{u}}$ the control input and $x_k\in\mathbb{X}\subseteq \mathbb{R}^{n_\mathrm{x}}$ the state. The signal $e_k$ is the sample-path realization of an \rev{i.i.d.~white} noise process of finite variance, taking values in $\rev{\mathbb{E}\subseteq\mathbb{R}^{n_\mathrm{x}}}$ and being independent from $u$ in the statistical sense. 
The functions  $f_{\mathrm{d}}:\mathbb{X} \times\mathbb{U} \times \mathbb{E}\rightarrow\mathbb{R}^{n_\mathrm{x}}$ and $h:\mathbb{X}\rightarrow\mathbb{R}^{n_\mathrm{y}}$ are the state-transition and output functions, respectively. 
It is assumed that the sets $\mathbb{U}$ and $\mathbb{E}$ are such that $\mathbb{X}$ is forward invariant under $f_\mathrm{d}$ and $0\in\mathbb{X}$. 
The objective is to estimate a Koopman model of 
the deterministic (process) part of \eqref{eq:data_gen}. This is done using an input-output data sequence $\mathcal{D}_N=\{(u_k,y_k)\}^N_{k=0}$ collected from the system \eqref{eq:data_gen}. We define next the model structure that we propose to identify a lifted Koopman form of the system under the control  input $u$ and noise process $e$.
\subsection{Koopman model structure}To analytically derive an equivalent Koopman model of \eqref{eq:data_gen}, we begin by decomposing $f_\mathrm{d}(x_k,u_k,e_k)$ into autonomous, input and noise-related components as follows:
\begin{align}\label{eq:nl_sys_decomp}
	f_\mathrm{d}(x_k,u_k,e_k)&=f_\mathrm{d}(x_k,0,0)+\underbrace{f_\mathrm{d}(x_k,u_k,e_k)-f_\mathrm{d}(x_k,0,0)}_{\tilde{f}_\mathrm{d}(x_k,u_k,e_k)}\\
	&=f_\mathrm{d}(x_k,0,0) + \tilde{f}_\mathrm{d}(x_k,u_k,0)+\underbrace{\tilde{f}_\mathrm{d}(x_k,u_k,e_k)-\tilde{f}_\mathrm{d}(x_k,u_k,0)}_{d(x_k,u_k,e_k)} \notag \\
	&=f(x_k) + g(x_k,u_k) + d(x_k,u_k,e_k) \notag
\end{align}
where $g(x_k,0)=0$ and $d(x_k,u_k,0)=0$. This decomposition, which always exists, extends the one discussed in \cite{Aut_Iacob_inputs} and \cite{surana_obs} for the noiseless case. To derive an exact finite dimensional Koopman representation, we make the following assumptions.
\begin{assumption}\label{assumption:exact_embedding_aut}
There exists a finite dimensional dictionary of observables $\Phi:
\mathbb{X}\rightarrow\mathbb{R}^{n_\mathrm{f}}$ with $\Phi=[\ \phi_1\ \cdots \ \phi_{n_\mathrm{f}}\ ]^\top$ in an appropriate Banach space $\mathcal{F}$ such that \rev{$\Phi \circ f(\cdot) \in \mathrm{span} \{\Phi\}$, which implies that} 
\begin{equation}\label{eq:eq_from_ass_1}
     \Phi \circ f(\cdot) = A\Phi (\cdot)\; \text{ with }  
     A\in\mathbb{R}^{n_\mathrm{f}\times n_\mathrm{f}}.
\end{equation}
\end{assumption}
\begin{assumption}\label{assumption:exact_embedding_output_aut}
	The output map $h$ can be exactly represented by the observables $\Phi$ in Assumption \ref{assumption:exact_embedding_aut}, in other words \rev{$h \in \mathrm{span}\{\Phi\}$, which implies that}
	\begin{equation}
        \rev{h(\cdot)=C\Phi(\cdot)} \; \text{ with } C\in\mathbb{R}^{n_\mathrm{y} \times n_\mathrm{f}}.
	\end{equation}
\end{assumption}
While Assumption \ref{assumption:exact_embedding_output_aut} can be easily satisfied (for example by including $h$ in the dictionary of observables), Assumption \ref{assumption:exact_embedding_aut} is more challenging due to the finite dimensionality. While there exist methods for exact finite embedding of polynomial systems \revtwo{\cite{Brunton_2016}}, \cite{Iacob_poly_system}, \cite{Iacob_CT_block_embedding}, methods for polyflows \cite{Polyflows}, or results in immersion theory \revtwo{\cite{Levine_86}}, \cite{Wang_Immersion}, the conditions for the existence of an exact embedding of more general classes of nonlinear systems are lacking. Hence, it is currently an open question what are the limitations of Assumption \ref{assumption:exact_embedding_aut}, \rev{especially in terms of the findings in \cite{liu-ozay-sontag}.} Otherwise, \eqref{eq:eq_from_ass_1} only holds in an approximative sense, \revtwo{resulting in a certain level of projection error (see it later in this section)}.

In this work, we \revtwo{will primary rely on}  Assumptions \ref{assumption:exact_embedding_aut} and \ref{assumption:exact_embedding_output_aut} \revtwo{to investigate the estimation error of our introduced approach, while dealing jointly with both projection and estimation error is left for future research. However, we will derive both in the finite and infinite dimensional cases the Koopman form of nonlinear systems with control input and influenced by process and measurement noise and we use the resulting form in our method as the model structure to be identified.} 

For this purpose, we formulate the following theorem.
\begin{theorem} \label{TH1} 
	\revtwo{Given a dictionary of continuously differentiable functions $\Phi(\cdot)$ that satisfy} Assumptions \ref{assumption:exact_embedding_aut} and \ref{assumption:exact_embedding_output_aut} \revtwo{for} the nonlinear system \eqref{eq:data_gen}, \revtwo{then, the dynamics of  \eqref{eq:data_gen}} can be \revtwo{represented in} the form:
	\begin{subequations}\label{eq:koop_model_structure}
	\begin{align}
    z_{k+1}&=Az_k+B(z_k,u_k)u_k + K(z_k,u_k,e_k)e_k\label{eq:koop_model_structure_state} \\
    y_k&=Cz_k +e_k\label{eq:koop_model_structure_out}
\end{align}
\end{subequations}
which is an exact finite dimensional Koopman form with innovation noise structure and  \revtwo{$z_k=\Phi(x_k)\in\mathbb{R}^{n_\mathrm{z}}$ being the lifted state with} $n_\mathrm{z}=n_\mathrm{f}$.
\end{theorem}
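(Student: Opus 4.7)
The plan is to start from the decomposition \eqref{eq:nl_sys_decomp} already supplied in the text, which gives $f_\mathrm{d}(x_k,u_k,e_k) = f(x_k) + g(x_k,u_k) + d(x_k,u_k,e_k)$ with $g(x_k,0)=0$ and $d(x_k,u_k,0)=0$, and then apply the observable map $\Phi$ to both sides of \eqref{eq:data_gen_x}. I would rewrite the right-hand side as a telescoping sum of three terms, namely $\Phi(f(x_k))$, the increment $\Phi\bigl(f(x_k)+g(x_k,u_k)\bigr)-\Phi(f(x_k))$, and the increment $\Phi(x_{k+1})-\Phi\bigl(f(x_k)+g(x_k,u_k)\bigr)$. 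By Assumption \ref{assumption:exact_embedding_aut}, the first term equals $A\Phi(x_k)=Az_k$, which produces the linear part of \eqref{eq:koop_model_structure_state}.

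The second step is to show that the two remaining increments factor through $u_k$ and $e_k$, respectively. The second increment vanishes whenever $u_k=0$, and the third vanishes whenever $e_k=0$, both by construction of the decomposition. Assuming sufficient regularity of $f_\mathrm{d}$ and $\Phi$ (continuous differentiability is enough), Hadamard's lemma provides matrix-valued maps $\tilde B$ and $\tilde K$ such that
\begin{equation*}
\Phi\bigl(f(x_k)+g(x_k,u_k)\bigr)-\Phi(f(x_k)) = \tilde B(x_k,u_k)u_k,
\end{equation*}
\begin{equation*}
\Phi(x_{k+1})-\Phi\bigl(f(x_k)+g(x_k,u_k)\bigr) = \tilde K(x_k,u_k,e_k)e_k.
\end{equation*}
To express $\tilde B$ and $\tilde K$ in terms of the lifted state $z_k$ rather than $x_k$, I would invoke injectivity of $\Phi$ on $\mathbb{X}$ (which is implicit in the Koopman embedding being faithful; see the compliant-surface discussion in Section \ref{sec_2b}) so that a left inverse $\Phi^{\dagger}$ is available on the image, and then set $B(z_k,u_k):=\tilde B(\Phi^{\dagger}(z_k),u_k)$ and $K(z_k,u_k,e_k):=\tilde K(\Phi^{\dagger}(z_k),u_k,e_k)$. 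Combining the three pieces yields exactly \eqref{eq:koop_model_structure_state}.

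The output equation is the easy part: Assumption \ref{assumption:exact_embedding_output_aut} gives $h(x_k)=C\Phi(x_k)=Cz_k$, so adding the measurement noise $e_k$ to both sides of \eqref{eq:data_gen_y} immediately produces \eqref{eq:koop_model_structure_out}. Because the same $e_k$ drives both state and output equations, the resulting model has exactly the innovation noise structure claimed in the theorem, which is why we end up with a $K$-term multiplying $e_k$ in the state update rather than an independent process noise.

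I expect the main obstacle to be the rigor of the factoring step: Hadamard's lemma requires differentiability, and the excerpt only asserts that $f$ is bounded. I would therefore state a regularity hypothesis (e.g. $f_\mathrm{d}$ and $\Phi$ are $C^1$ on their domains, with $\Phi$ injective on $\mathbb{X}$) explicitly in the proof, and then note that the factorizations $\tilde B,\tilde K$ are not unique but any admissible choice suffices for the theorem to go through. A secondary subtlety is that, strictly speaking, $z_k$ lives on the compliant manifold $\Phi(\mathbb{X})$ rather than all of $\mathbb{R}^{n_\mathrm{z}}$, so the identities above should be read as holding on that manifold; this is consistent with the discussion in Section \ref{sec_2b} and does not affect the statement of the theorem.
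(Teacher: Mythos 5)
Your proposal is correct and follows essentially the same route as the paper's proof in Appendix~\ref{sec:appendix_proof_embedding}: the same three-part telescoping decomposition, lifting the autonomous part via Assumption~\ref{assumption:exact_embedding_aut}, factoring the input and noise increments through $u_k$ and $e_k$ by a fundamental-theorem-of-calculus/Hadamard argument, and pulling back through $\Phi^\dagger$. The only cosmetic difference is that the paper performs the factorization in two stages (FTC to obtain $\tilde{B}_\mathrm{x}g$ and $\tilde{K}_\mathrm{x}d$, then an exact factorization lemma to extract $u_k$ and $e_k$), whereas you apply Hadamard's lemma in one step; your explicit flagging of the required $C^1$ regularity and injectivity of $\Phi$ is a point the paper leaves implicit.
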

\begin{proof}
	The proof is given in \cref{sec:appendix_proof_embedding}.
\end{proof}
While \eqref{eq:koop_model_structure} is an exact embedding if Assumptions \ref{assumption:exact_embedding_aut} and \ref{assumption:exact_embedding_output_aut} hold, in an identification setting, it may be desirable to trade accuracy with simplicity of the model. We can conceptually write the Koopman model to be identified as:
\begin{subequations}\label{eq:koopman_model_structure_selection}
	\begin{align}
	z_{k+1}&=Az_k + \mathcal{B}u_k + \mathcal{K}e_k\\
	y_k &= Cz_k + e_k 
	\end{align}
\end{subequations}
where the input function $\mathcal{B}$ may have different complexities, i.e., $\mathcal{B}\in\{B,\;\sum_{i=1}^{n_\mathrm{z}}B_iz_{k,i}+B_0$,$\;B(z_k),\;B(z_k,u_k)\}$, corresponding to a linear, bilinear, input affine or what we call a general model structure. Similar to the $\mathcal{B}$ matrix, the innovation noise term can be considered with various dependencies: $\mathcal{K}\in\{K,\; \sum_{i=1}^{n_\mathrm{z}}K_i z_{k,i}+K_0,\; K(z_k),$ $K(z_k,u_k),\;K(z_k,u_k,e_k) \}$. Choosing a suitable structural form of $\mathcal{B}$ and $\mathcal{K}$ corresponds to a model structure selection problem as in classical system identification.
\par To use the more complex Koopman models for control (not fully LTI), it is possible to cast the model into a \emph{linear parameter-varying} (LPV) form (see \cite{Aut_Iacob_inputs} for the noiseless case) by introducing a so called scheduling variable $p_k$ that is required to be measurable/observable from the system. For nonlinear systems described by LPV models, there exists a multitude of convex and computationally efficient control synthesis methods where the user can also shape performance and achieve global guarantees of stability \cite{Mohammadpour}. To cast \eqref{eq:koopman_model_structure_selection} into an LPV form, we must first note that, in general, the noise $e_k$ is not directly measurable, but thanks to the innovation noise setting of \eqref{eq:data_gen},  $e_k = y_k - Cz_k$ holds. Then, we can conceptually write the LPV form of the Koopman model \eqref{eq:koopman_model_structure_selection} as:
\begin{subequations}\begin{align}
		z_{k+1}&=Az_k + \mathcal{B}_\mathrm{z}(p_k)u_k + \mathcal{K}_\mathrm{z}(p_k)e_k\\
	y_k &= Cz_k + e_k 
	\end{align}\end{subequations}
where $p_k=\mu(z_k,u_k,y_k)$ is a scheduling map and $\mathcal{B}_\mathrm{z} $ with $\mathcal{K}_\mathrm{z}$  belong to a predefined function class such as affine, polynomial or rational, such that $\mathcal{B}_\mathrm{z} \circ \mu = \mathcal{B}$, $\mathcal{K}_\mathrm{z}\circ\mu = \mathcal{K}$  \cite{Aut_Iacob_inputs}, \cite{Mohammadpour}. Note that the dependencies of $\mu$ are based on the choice of $\mathcal{B}$ and $\mathcal{K}$.
\\
\par
\revtwo{Next, we can can relax the finite dimensionality of the dictionary of observables in Assumptions \ref{assumption:exact_embedding_aut} and \ref{assumption:exact_embedding_output_aut} and extend \cref{TH1} to infinite dimensional liftings with $\{\phi_i\}^\infty_{i=1}$. This is formulated in the following lemma.
}
\revtwo{
\begin{lemma}\label{lem:infinite_per_obs}
    Given an observable $\phi_i:\mathbb{X}\rightarrow \mathbb{R}$, with $\phi_i\in\mathcal{C}^1$, and the Koopman operator $\Koop$ acting on $\phi_i$ such that \eqref{eq:koop_composition} holds, then the dynamics of each observable associated with \eqref{eq:data_gen_x} is exactly given by:
    \begin{equation}\label{eq:per_obs_dyna_calc_input_noise}
\begin{split}
			\phi_i(x_{k+1})=\Koop\phi_i(x_k)&+\underbrace{\left(\int^1_0\frac{\partial \phi_i}{\partial x}(f(x_k)+g(x_k,u_k))\dif \lambda\right)g(x_k,u_k)}_{\tilde{b}_\mathrm{x}(x_k,u_k)}
            \\ &+\underbrace{\left(\int^1_0\frac{\partial \phi_i}{\partial x}(f(x_k)+g(x_k,u_k) + \lambda d(x_k,u_k,e_k))\dif \lambda\right)d(x_k,u_k,e_k)}_{\tilde{k}_\mathrm{x}(x_k,u_k,e_k)}.
            \end{split}
		\end{equation}
\end{lemma}
\begin{proof} The proof is given in \cref{sec:appendix_inf_embedding}.
\end{proof}
Note that, as $g(x_k,0)=0$ and $d(x_k,u_k,0)=0$, one can apply 
the procedure described in \cref{sec:appendix_proof_embedding} to obtain the more familiar form:
\begin{equation}\label{eq:per_obs_dyna_calc_factorized}
    \phi_i(x_{k+1})=\Koop\phi_i(x_k)+b_\mathrm{x}(x_k,u_k)u_k+
    k_\mathrm{x}(x_k,u_k,e_k)e_k,
\end{equation}
where
\begin{equation*}
	b_\mathrm{x}(x_k,u_k) = \int^1_0\frac{\partial \tilde{b}_\mathrm{x}}{\partial u}(x_k,\lambda u_k)\dif \lambda \quad \text{and} \quad k_\mathrm{x}(x_k,u_k,e_k)= \int^1_0\frac{\partial \tilde{k}_\mathrm{x}}{\partial e}(x_k, u_k, \lambda e_k)\dif \lambda.
\end{equation*}
}
\revtwo{
By relaxing the finite-dimensional assumption of the embedding of the autonomous part, \cref{lem:infinite_per_obs} describes an exact operator-based formulation of the input and noise effects in the infinite-dimensional lifted representation, which holds for each observable $\phi_i$. To the authors' knowledge, this is the first description in the infinite-dimensional setting that provides an exact characterization of the dynamics of observables of nonlinear systems in a general form under input and process noise.}
\par
\revtwo{Finally, we can characterize the projection error of using the finite dimensional Koopman form in \cref{TH1}, i.e., when Assumption \ref{assumption:exact_embedding_aut} is violated.}  

\revtwo{
\begin{corollary}\label{lem:corollary_approx}
 Given a set of observables $\{\phi_i:\mathbb{X}\rightarrow \mathbb{R}\}_{i=1}^{\infty}$, with $\phi_i\in\mathcal{C}^1$, and the Koopman operator $\Koop$ acting on $\phi_i$ such that \eqref{eq:koop_composition} holds. For $n_\mathrm{f} \geq 1$, let $\Phi=[\ \phi_1\ \cdots \ \phi_{n_\mathrm{f}}\ ]^\top$. Then,
\begin{equation}\label{eq:autonomous_approx}
    \Phi(f(x_k))=A\Phi(x_k)+R(x_k),
\end{equation}
where $A\in\mathbb{R}^{n_\mathrm{f}\times n_\mathrm{f}}$ is according to \eqref{eq:comp:A} and $R:\mathbb{X}\rightarrow \mathbb{R}^{n_\mathrm{f}}$ is the state-dependent projection error term. Then, a Koopman embedding of \eqref{eq:data_gen_x} is given by:
\begin{equation}\label{eq:koop_approx_lift}   \Phi(x_{k+1})=A\Phi(x_k)+B(\Phi(x_k),u_k)u_k + K(\Phi(x_k),u_k,e_k)e_k+R(x_k).
 \end{equation}
\end{corollary}
\begin{proof}
    The proof directly follows the derivation for \cref{TH1} given in \cref{sec:appendix_proof_embedding}, by using \eqref{eq:autonomous_approx} in \textit{Step 1}. Alternatively, the observables \eqref{eq:per_obs_dyna_calc_intermediary} can be stacked, then  \eqref{eq:autonomous_approx} and the factorization are applied to obtain \eqref{eq:koop_approx_lift}.
\end{proof}
}
\revtwo{
This result indicates that the term $R(x_k)$ propagates through the derivation unaffected by the input and noise contributions, thus fully characterizing the error of the lifted representation \eqref{eq:koop_approx_lift} of \eqref{eq:data_gen_x}. This shows that one can use a method that characterizes the projection error of the autonomous embedding (e.g., see \cite{zhang_qunatitative}) to fully derive an error bound for the input-driven and noise-affected Koopman representation of \eqref{eq:data_gen_x}, if the analytical derivation in \cref{lem:corollary_approx} is used. 
}

\vspace{.5cm}
\subsection{Identification problem} \label{sec:id:prob}
The objective is to introduce a parametrized version of \eqref{eq:koop_model_structure} to learn the underlying dependencies together with a lifting map using ANNs. This means identifying the lifted state $z_k$, the linear maps $A$, $C$, and the nonlinear maps $B$ and $K$. To this end, we introduce an identification cost function that we chose to be the squared prediction error due to its extended use and success in system identification and its close connection to maximum-likelihood estimators under specific settings \cite{book_ljung}, \cite{Ljung2013}. For this, a predictor is needed and we derive it next. As a first step, we exploit $e_k = y_k - Cz_k$ in the assumed innovation form \eqref{eq:data_gen}  and substitute it in \eqref{eq:koop_model_structure_state} to obtain:
\begin{multline}
z_{k+1}=Az_k +B(z_k,u_k)u_k + K(z_k,u_k,y_k-Cz_k)(y_k-Cz_k)\\	=\underbrace{\left(A-\tilde{K}(z_k,u_k,y_k)C\right)z_k + B(z_k,u_k)u_k + \tilde{K}(z_k,u_k,y_k)y_k}_{\mathcal{F}(z_k,u_k,y_k)} 
\end{multline}
with $\tilde{K}(z_k,u_k,y_k):=K(z_k,u_k,y_k-Cz_k)$. Then, iterating \eqref{eq:koop_model_structure_out} forward in time, for $n\geq 1$, we arrive at 
\begin{equation}\label{eq:predictor_steps}
	\begin{split}
		y_k &= Cz_k + e_k \\
		y_{k+1}&=Cz_{k+1}+e_{k+1}
		=C\mathcal{F}(z_k,u_k,y_k)+e_{k+1}\\
		&\  \vdots\\
		y_{k+n}&=C(\circ_n\mathcal{F})(z_k,u^{k+n-1}_k,y^{k+n-1}_k)+e_{k+n}
	\end{split}
\end{equation}
where $u^{k+n-1}_k=[\ u^\top_k\ \cdots\ u^\top_{k+n-1}\ ]^\top$ and $y^{k+n-1}_k$ is similarly defined.  
In a compact form:
\begin{equation}\label{eq:compact_predictor_form}
y^{k+n}_k=\Gamma_n(z_k,u^{k+n-1}_k,y^{k+n-1}_k)+e^{k+n}_k,
\end{equation}
with $e^{k+n}_k=[\ e^\top_k\ \cdots\ e^\top_{k+n}]^\top$. Based on the i.i.d  assumption \revtwo{on the noise} $e_k$, the conditional expectation of \eqref{eq:compact_predictor_form} w.r.t.~$e$ based on the available input-output data and $z_k$ is:
\begin{equation}\label{eq:expectation_one_step}
\hat{y}^{k+n}_k=\mathbb{E}_e \left\{y^{k+n}_k \mid z_k,u^{k+n-1}_k,y^{k+n-1}_k\right\}=\Gamma_n(z_k,u^{k+n-1}_k,y^{k+n-1}_k)
\end{equation}
which is the one-step-ahead predictor associated with \eqref{eq:koop_model_structure} along the time interval $[k,k+n]$ and with initial condition $z_k$. 
This can be computed for the entire data sequence $\mathcal{D}_N$ as $\hat{y}^N_0=\Gamma_N(z_0,u^{N-1}_0,y^{N-1}_0)$ or, for a particular time-moment, as $\hat{y}_k=\gamma_k(z_0,u^{k-1}_0,y^{k-1}_0)$ with $\gamma_k=C(\circ_k\mathcal{F})$. 

As a next step to identify a Koopman embedding of the data-generating system \eqref{eq:data_gen} in the form of \eqref{eq:koop_model_structure}, we introduce a parameterization of \eqref{eq:koop_model_structure} in terms of 
\begin{subequations}\label{eq:koopman_parametrized_model}
\begin{align}
\hat{z}_{k+1}&=A_\theta \hat{z}_k + B_\theta (\hat{z}_k,u_k)u_k+K_\theta(\hat{z}_k,u_k,\hat{e}_k)\hat{e}_k,\\
\hat{y}_k&=C_\theta \hat{z}_k.
\end{align}
\end{subequations}
In \eqref{eq:koopman_parametrized_model}, $\hat{z}$ is the predicted lifted state, $\hat{y}$ is the predicted output, and $\hat{e}$ is the prediction error. While $A_\theta$ and $C_\theta$ are matrices with their elements as parameters, the maps $B_\theta$ and $K_\theta$ are considered with a given choice of complexity: linear, bilinear, input affine, or general nonlinear dependency. In the linear case, $B_\theta$ and $K_\theta$ are also matrices with their elements as parameters, in the bilinear case, the matrices of the bilinear relations are taken as parameters, while, in the input affine and general cases, $B_\theta$ and $K_\theta$  are represented by ANNs with weights and bias terms collected in $\theta$ together with the weights of a linear bypass. 
The collection of all parameters associated with $A_\theta, \ldots, K_\theta$ are collected in $\theta\in\Theta\subseteq \mathbb{R}^{n_\theta}$. The parameterized model structure gives rise to a parametrized predictor $\Gamma_{N,\theta}$, providing the calculation of $\hat{y}_k$ and the prediction error $\hat{e}_k$ over a data set $\mathcal{D}_N=\{(u_k,y_k)\}^N_{k=0}$ of the data-generating system.

To accurately estimate \eqref{eq:koop_model_structure}, we minimize the $\ell_2$ loss of the error between the measured output $y_k$ and predicted output $\hat{y}_k$:
\begin{equation}\label{eq:cost_initial_function}
V^{\text{pred}}_{\mathcal{D}_N}(\theta)=\frac{1}{N+1}\sum^N_{k=0}\|y_k-\hat{y}_k\|^2_2.
\end{equation}
Note that in \eqref{eq:cost_initial_function}, the initial lifted state $z_0$ is unknown and needs to be optimized during the minimization of \eqref{eq:cost_initial_function}. 
The minimum of \eqref{eq:cost_initial_function} will provide a Koopman model with the best one-step-ahead prediction performance. Later we will investigate  how this estimate is related to the true Koopman embedding of the original nonlinear system, if it exists.     

There are two challenges associated with \eqref{eq:cost_initial_function}: (i) estimation of \eqref{eq:koop_model_structure} in this form does not provide a direct characterisation of the observable or a way how the lifted state can be calculated from measurable variables in the original system; (ii) the computational cost of solving the minimisation problem based on \eqref{eq:cost_initial_function} is high in case of large data sets and numerically challenging 
under ANN parametrisation of $B_\theta$ or $K_\theta$, due to vanishing of the gradients during backward / forward propagation.   

\subsection{Subspace encoder} \label{sec:subenc}
To overcome problem (i), in this section, the estimation of the lifted state $z_k$ is considered in terms of an encoder.
By exploiting input-output data, we use the reconsutructability concept, discussed in the autonomous case, which we now generalize for \eqref{eq:koop_model_structure}. 
 Starting with observability, the following \rev{equality} holds based on \eqref{eq:compact_predictor_form}:
\begin{equation}\label{eq:nl_alg_cond_with_input}
\underbrace{\begin{bmatrix}
\Gamma_n(z_k,u^{k+n-1}_k,y^{k+n-1}_k)+e^{k+n}_k\\ \Psi(z_k)
\end{bmatrix}}_{\mathcal{O}_{\mathrm{z},n}(z_k,u^{k+n-1}_k,e^{k+n}_k)}=\begin{bmatrix}
y^{k+n}_k \\ 0
\end{bmatrix}
\end{equation}
where, as in the autonomous case, we have the set of nonlinear constraints $\Psi$. For $n\geq 1$, if $\exists (z_*,w_*)\in\mathbb{R}^{n_\mathrm{z}}\times\mathbb{R}^{nn_\mathrm{u}\times (n+1)n_\mathrm{y} \times (n+1)n_\mathrm{y}}$ for which the Jacobian  $\nabla_{(z_\ast,w_\ast)}\mathcal{O}_{\mathrm{z},n}$ has full row rank $n_\mathrm{z}$, then there exist open sets $\mathbb{Z}_0\subseteq\mathbb{R}^{n_\mathrm{z}}$, $\mathbb{U}_0\subseteq\mathbb{R}^{n_\mathrm{u}}$, $\mathbb{Y}_0\subseteq\mathbb{R}^{n_\mathrm{y}}$, $\mathbb{E}_0\subseteq\mathbb{R}^{n_\mathrm{y}}$, corresponding to the neighborhood of $(z_*,w_*)$ for which 
$\mathcal{O}_{\mathrm{z},n}$ is partially invertible and \eqref{eq:koop_model_structure} is locally observable on $(\mathbb{Z}_0,\mathbb{U}_0,\mathbb{Y}_0,\mathbb{E}_0)$, \rev{see \cite{nl_obs_1982}}. Note that if the representation is locally observable, then the above condition is satisfied for any $n \geq  n_\mathrm{z}-1$. By inverting $\mathcal{O}_{\mathrm{z},n}$, we get
\begin{equation}
z_k=\Lambda_{\mathrm{z},n}(u^{k+n-1}_k,y^{k+n}_k,e^{k+n}_k)
\end{equation}
where $\Lambda_{\mathrm{z},n}:\mathbb{U}^n_0\times \mathbb{Y}^{n+1}_0\times \mathbb{E}^{n+1}_0\rightarrow{\mathbb{R}^{n_\mathrm{z}}}$ is the observability map.  To determine $z_k$ based on past input-output data, we derive
\begin{align} \label{rec:map:full}
z_k&=(\circ_n \mathcal{F})(z_{k-n},u^{k-1}_{k-n},y^{k-1}_{k-n})\\
&=(\circ_n\mathcal{F})(\Lambda_{\mathrm{z},n}(u^{k-1}_{k-n},y^k_{k-n},e^k_{k-n}),u^{k-1}_{k-n},y^{k-1}_{k-n}) \notag\\
&:=\Pi_{\mathrm{z},n}(u^{k-1}_{k-n},y^{k}_{k-n},e^k_{k-n}) \notag
\end{align}
where $\Pi_{\mathrm{z},n}:\mathbb{U}^n_0\times \mathbb{Y}^{n+1}_0\times \mathbb{E}^{n+1}_0\rightarrow{\mathbb{R}^{n_\mathrm{z}}}$ is the recosntructability map. Note that the noise sequence $e^k_{k-n}$ is not directly available in practice, hence to compute $z_k$ based on \eqref{rec:map:full},
 again we can exploit the i.i.d.~white noise property of $e_k$ to arrive at:
\begin{equation} \label{enc:form:Koopman}
\bar{z}_k = \mathbb{E}_e \left\{ z_k \mid u^{k-1}_{k-n} , y^k_{k-n}\right\}=\bar{\Pi}_{\mathrm{z},n}(u^{k-1}_{k-n},y^k_{k-n}),
\end{equation}
which mapping gives an efficient estimator of $z_k$ in the conditional mean sense based on past data with a given lag $n$. In fact, \eqref{enc:form:Koopman} functions as an encoder, mapping from the past data to the lifted state $z_k$, i.e., a subspace of the lifted state space. However, an exact calculation of the encoder $\bar{\Pi}_{\mathrm{z},n}$ for a given ANN parametrization of $f_\theta$ and $h_\theta$ is infeasible in practice, due to the required analytic inversion of $\mathcal{O}_{\mathrm{z},n}$ to get $\Lambda_{\mathrm{z},n}$ 
and the computation of the conditional expectation of $\bar{\Pi}_{\mathrm{z},n}$ under the unknown probability density function of $e_k$. Hence, our objective is to learn $\bar{\Pi}_{\mathrm{z},n}$ directly from the data by introducing a parametrized function $\bar \Pi^\eta_{\mathrm{z},n}$ with parameters $\eta\in\Upsilon \subseteq\mathbb{R}^{n_\eta}$, e.g., using an ANN in the general case, which is co-estimated with $A_\theta$, $B_\theta$, $C_\theta$ and $K_\theta$\footnote{\rev{We note an alternate method described in \cite{otto_bilinear}, where the authors describe the Koopman representation as a hidden Markov model and use Kalman filtering to estimate the latent (lifted) state based on partial observations. However, the bilinear form and the way noise affects the lifted model are chosen empirically, whereas \eqref{eq:koopman_parametrized_model} \revtwo{gives} an analytic form.}}. 
\par Note that, similar to the autonomous case discussed in Section \ref{sec:obs_and_reconstr_aut}, we can conceptually show that exploiting the observability and reconstructability properties of the nonlinear system \eqref{eq:data_gen}, the potentially needed number of lags $n\geq n_\mathrm{z}-1$ is greatly reduced. Given a lifting map $\Phi:\mathbb{R}^{n_\mathrm{x}}\rightarrow\mathbb{R}^{n_\mathrm{z}}$, the state $z_k=\Phi(x_k)$ can be calculated based on the reconstructed $x_k$ using a number of lags $n\geq n_\mathrm{x}-1$, as detailed in \cite{Gerben_Aut}, for the reconstructabillity map associated \revtwo{with} \eqref{eq:data_gen}.  As such, when computing $\bar \Pi^\eta_{\mathrm{z},n}$, the number of lags needed to estimate $z_k$ is related to the dimension of the underlying nonlinear system, rather than that of the lifted system.  It is important to note that, while $n_\mathrm{x}-1$ guarantees local invertibility, for global observability guarantees one would need to increase the number of lags $n$. For example, works such as \cite{Stark_1} and \cite{Stark_2} describe a sufficient condition for reconstruction of $x_k$ to be $n\geq 2n_\mathrm{x}$, for nonlinear systems with deterministic and stochastic forcing, respectively. Note that, as the last step in \eqref{eq:predictor_steps} is $k+n$ instead of $k+n-1$, we subtract 1 from the given value in \cite{Stark_1} and \cite{Stark_2}, which is $2n_\mathrm{x}+1$. While \eqref{eq:nl_alg_cond_with_input} is more complex than the maps discussed in \cite{Stark_1} and \cite{Stark_2}, these results can still serve as a guideline when performing experiments and choosing $n$. Furthermore, if $n_\mathrm{x}$ is known, although $\mathcal{O}_{\mathrm{z},n}$ may be locally invertible for smaller values of $n$, it is still a safe choice to start with $n\geq n_\mathrm{x}-1$, which provides local guarantees for reconstructability.

\subsection{Model estimation via multiple shooting and subspace encoding}

To overcome problem (ii), we truncate the $\ell_2$ prediction loss \eqref{eq:cost_initial_function} to a horizon of length $T$ and we divide the data into subsections on which the truncated prediction loss is calculated, giving a so called \emph{multiple shooting} form of the optimization problem. This approach reduces the computational cost and improves numerical stability \cite{Ribeiro}\revtwo{, and contains \emph{single shooting} (corresponding to the minimization of \eqref{eq:cost_initial_function}) as a special case}. Accordingly, the prediction loss is reformulated as:  
\begin{subequations}\label{eq:prediction_loss_final}
\begin{align}
V^{\mathrm{enc}}_{\mathcal{D}_N}(\theta,\eta)&=\frac{1}{N_\mathrm{sec}}\sum^{N-T+1}_{k=n+1}\sum^{T-1}_{\tau=0}\|y_{k+\tau}-\hat{y}_{k+\tau|k}\|^2_2 \label{eq:pred:enc}\\
\hat{z}_{k|k}&=\bar \Pi^\eta_{\mathrm{z},n}(u^{k-1}_{k-n},y^k_{k-n})\\
\hat{z}_{k+\tau+1|k}&=A_\theta \hat{z}_{k+\tau|k}+B_\theta(\hat{z}_{k+\tau|k},u_{k+\tau})u_{k+\tau}  
+ K_\theta(\hat{z}_{k+\tau|k},u_{k+\tau},\hat{e}_{k+\tau|k})\hat{e}_{k+\tau|k}\nonumber \\
\hat{y}_{k+\tau|k}&=C_\theta \hat{z}_{k+\tau|k}\\
\hat{e}_{k+\tau|k}&=y_{k+\tau}-\hat{y}_{k+\tau|k}
\end{align}
\end{subequations}
with $N_\mathrm{sec}=(N-T-n+1)T$. Here, the notation $(|)$ is introduced to make the distinction $(\text{current time index }|\text{ start index})$ of variables associated with a given section of the data. Note that chopping up the cost function to $T$-length sections would require the introduction of the initial condition  $\hat{z}_{k|k}$ as an optimization variable, which would tremendously increase the number of optimization variables, potentially losing any computational benefit of the multiple-shooting-based reformulation. 
To avoid this, the previously introduced subspace encoder is used 
\begin{equation} \label{subenc:2}
\hat{z}_{k|k}:=\bar \Pi^\eta_{\mathrm{z},n}(u^{k-1}_{k-n},y^k_{k-n}),
\end{equation}
with $\eta\in\Upsilon\subseteq\mathbb{R}^{n_\eta}$, corresponding to a general ANN parameterization of $\bar\Pi^\eta_{\mathrm{z},n}$ under $n$ number of lags.

Now we can co-estimate the encoder $\bar \Pi^\eta_{\mathrm{z},n}$ together with the parameterized ~matrices $(A_\theta,C_\theta)$ and matrix functions $(B_\theta,K_\theta)$ of the Koopman model. Fig.~\ref{fig:encoder_structure} shows the resulting network structure. Note that the computational cost of \eqref{eq:prediction_loss_final} can be further reduced by using a batched formulation, which allows to compute the cost of each section in parallel, independent from each other. This is achieved by only summing over a subset of the sections, which can also partially overlap. The reformulated batch cost function is:
\begin{subequations}
	\begin{align}
		V^{\mathrm{batch}}_{\mathcal{D}_N}&(\theta,\eta)=\frac{1}{N_{\text{batch}}}\sum_{k\in\mathcal{I}}\frac{1}{T}\sum^{T-1}_{\tau=0}\|y_{k+\tau}-\hat{y}_{k+\tau|k}\|^2_2\\
		&\mathcal{I} \subset\mathbb{I}^{N-T+1}_{n+1}=\{n+1,n+2,\dots,N-T+1\}\\
		&\text{s.t. } |\mathcal{I}| = N_{\text{batch}}
	\end{align}
\end{subequations}
which enables the application of advanced batch optimization algorithms like Adam \cite{adam}. Moreover, the complete dataset does not need to be fully loaded into the memory, making the implementation more efficient \cite{Gerben_Aut}.
\begin{figure}[t!]
  \centering
\includegraphics[scale=0.75]{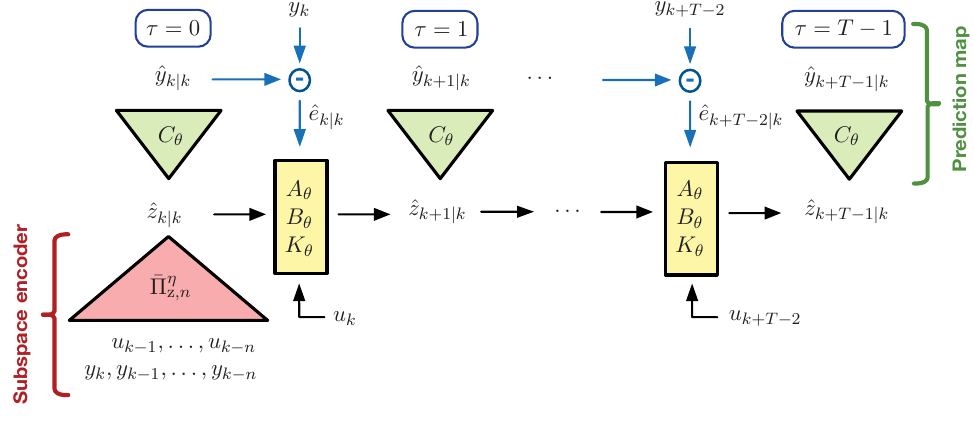} 
  \caption{Network architecture. The lifted state at moment $k$, i.e., $\hat{z}_{k|k}$, is estimated using the encoder function $\bar \Pi^\eta_{\mathrm{z},n}$ based on previously measured input and output data.} \vspace{-.5cm}\label{fig:encoder_structure}
\end{figure}
\section{Consistency analysis} \label{sec:convergence_and_consistency}

Next, we show the consistency of the proposed identification scheme that corresponds to the minimization of \eqref{eq:prediction_loss_final}. In fact, under the assumption that an exact Koopman embedding of \eqref{eq:data_gen} in the form of \eqref{eq:koop_model_structure} exists, 
we show that the resulting model estimate will converge to an equivalent representation of the system in the form \revtwo{of} \eqref{eq:koop_model_structure}, \revtwo{i.e., the estimation error will converge to zero with probability 1,} if the number of data points in the available data set $\mathcal{D}_N$ tends to infinity, that is, $N\rightarrow \infty$. \revtwo{For this analysis, we will first rely on the assumption of a finite-dimensional embedding and then discuss the conditions under which the results generalize to the infinite dimensional case. While we will not explicitly treat the projection error of inexact finite dimensional embedding, combining the current results with existing projection error characterization methods through \cref{lem:corollary_approx} can be used to derive joint error bounds.} The consistency analysis discussed in this section is an adaptation of the arguments in \cite{Gerben_Aut} to the considered Koopman identification problem.

\subsection{Convergence}
As a first step, the convergence of the Koopman model estimate will be shown. By satisfying Assumptions \ref{assumption:exact_embedding_aut} and \ref{assumption:exact_embedding_output_aut}, the data-generating system \eqref{eq:data_gen} has an exact representation by the Koopman form \eqref{eq:koop_model_structure} according to Theorem \ref{TH1}. To show convergence, this equivalent Koopman form of the system needs to satisfy the following stability \revtwo{assumption}:
\begin{assumption} 
\label{assum:S1}
    For any $\delta> 0$, there exist a $c\in [0, \infty)$ and a $\lambda \in[0,1)$ such that
    \begin{align}
        \mathbb{E}_e \{ \|y_k - \tilde{y}_k\|_2^4 \} < c  \lambda^{k}, \quad \forall k\geq 0,
    \end{align}
    under any initial conditions $ z_0,\tilde{z}_0\in\mathbb{R}^{n_\mathrm{z}}$  with $\|z_0-\tilde{z}_0\|_2<\delta$ and $\{(u_\tau,e_\tau)\}_{\tau={0}}^k\in \mathcal{S}_{[0,\infty]}$, where $\mathcal{S}_{[0,\infty]}$ is the $\sigma$-algebra generated by the random variables $\{(u_\tau,e_\tau)\}_{\tau={0}}^\infty$, and 
    $y_k$ and $\tilde{y}_k$ satisfy \eqref{eq:koop_model_structure}  with the same $(u_k,e_k)$, but with $z_0$ and $\tilde{z}_0$.
\end{assumption}

\revtwo{The above assumption  simply means that the data-generating system \eqref{eq:data_gen}, i.e., its Koopman form \eqref{eq:koop_model_structure}, is stable in an incremental sense. This means that starting from two distinct initial conditions, the solutions converge to each other under the same input $u$ and noise process $e$. If the data-generating system is not stable, then in practice, data cannot be obtained under free choice of the excitation $u$. In order to conduct experiments, a controller is required to be used to run the data-collection in closed loop. However, that will introduce a correlation between $u$ and the noise $e$, which will potentially require a different estimation approach and also stochastic analysis to show consistency. Hence, \cref{assum:S1} is quite commonly taken in the analysis of system identification methods.}
\par

To identify \eqref{eq:koop_model_structure}, the \emph{model structure} $M_\xi$ is composed of the Koopman model \eqref{eq:koopman_parametrized_model} with the forms of parametrization discussed in Section \ref{sec:id:prob} and the subspace encoder \eqref{subenc:2} with the parametrization discussed in Section \ref{sec:subenc}, giving the total parameter vector
$\xi = [\begin{array}{cc} \theta^\top & \eta^\top \end{array}]^\top$ that is restricted to vary in a compact set $\Xi\subset \mathbb{R}^{n_\xi}$.
This gives the \emph{model set} $\mathcal{M}=\{ M_\xi \mid \xi\in\Xi\} $. For each $\xi \in \Xi$, the model $M_\xi$ with a given encoder lag $n\geq 1$, can be written in a \emph{one-step-ahead predictor} form 
\begin{equation} \label{eq:predic:comp}
    \hat{y}_{k+\tau|k} =  \gamma^\mathrm{pred}_\tau (y^{k+\tau-1}_{k-n},u^{k+\tau-1}_{k-n},\xi),
\end{equation}
which is a combination of $\gamma_k$ based on \eqref{eq:expectation_one_step} and encoder \eqref{enc:form:Koopman}. Note that based on the parametrizations discussed in Sections \ref{sec:id:prob} and \ref{sec:subenc}, 
$\gamma^\mathrm{pred}_{(\cdot)}$ is differentiable w.r.t.~$\xi$ everywhere on an open neighborhood $\breve{\Xi}$ of $\Xi$. Furthermore, \eqref{eq:predic:comp} is required to be stable under any perturbation of the measured data, which is expressed as follows.
\begin{assumption}
\label{assum:M1} There exist scalars $c\in [0, \infty)$ and $\lambda \in[0,1)$ such that,
for any $\xi\in\breve\Xi$ and for any $\{(y_\tau,u_\tau)\}_{\tau=-n}^{k},\{(\tilde{y}_\tau,\tilde{u}_\tau)\}_{\tau=-n}^{k} \in\mathbb{R}^{(n_\mathrm{y}+n_\mathrm{u})\times (n+k+1)}$,
the predictors
  \begin{equation*} \hat{y}^\mathrm{pred}_k = \gamma^\mathrm{pred}_k (y^{k-1}_{-n},u^{k-1}_{-n},\xi), \quad 
  \tilde{y}^\mathrm{pred}_k = \gamma^\mathrm{pred}_k (\tilde y^{k-1}_{-n},\tilde u^{k-1}_{-n},\xi),
  \end{equation*} 
satisfy
    \begin{equation} \label{eq:M1}
        \|\hat{y}^\mathrm{pred}_k-\tilde{y}^\mathrm{pred}_k\|_2   \leq c \sigma(k) ,  \quad \forall k\geq 0,
\end{equation}
with $\sigma(k)=\sum_{\tau=-n}^{k} \lambda^{k-\tau} \left(\|{u}_\tau-\tilde{u}_\tau\|_2 + \|{y}_\tau-\tilde{y}_\tau\|_2 \right)$
 and
 $\|\gamma^\mathrm{pred}_k(0^{k-1}_{-n},0^{k-1}_{-n},\xi)\|_2 \leq c$, with $0^{k-1}_{-n}= [\ 0\ \cdots\ 0\ ]^\top$.
Additionally, there exist $c\in [0, \infty)$ and $\lambda \in[0,1)$ such that $\frac{\partial}{\partial \xi}\gamma^\mathrm{pred}_k$  satisfies \eqref{eq:M1} as well. 
\end{assumption}

\revtwo{ The above given assumption states that any $M_\xi\in \mathcal{M}$ is  stable in the sense that bounded deviations of input and output trajectories in two data sets $\mathcal{D}_N$ and $\tilde{\mathcal{D}}_N$ lead to bounded differences between the predicted responses by the model $M_\xi$. This commonly taken technical assumption is required to establish asymptotic results by avoiding that models in $\mathcal{M}$ correspond to unstable one-step-ahead predictors, which in turn lead to an explosion of the estimation error as $N \rightarrow \infty$.   
} \par

Now we can state the following theorem on convergence of the estimator.

\begin{theorem} 
\label{lem:convergence}
If the Koopman form \eqref{eq:koop_model_structure} of the data-generating system satisfies \revtwo{Assumption} \ref{assum:S1} with a quasi-stationary $u$ independent of the white noise process ${e}$ and the model set $\mathcal{M}$ defined by  \eqref{eq:koopman_parametrized_model} and \eqref{subenc:2} satisfies \revtwo{Assumption} \ref{assum:M1}, then
    \begin{align} \label{eq:conv}
        \underset{\mathrm{vec}(\theta,\eta)\in\Xi}{\mathrm{sup}} \left \|V^{\mathrm{enc}}_{\mathcal{D}_N}(\theta,\eta) - \mathbb{E}_e\{ V^{\mathrm{enc}}_{\mathcal{D}_N}(\theta,\eta)\}\right\|_2 \rightarrow 0,
    \end{align}
    with probability 1 as $N \rightarrow \infty$. 
\end{theorem}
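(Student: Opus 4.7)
The plan is to follow the classical Ljung-style uniform law of large numbers, adapted (as the authors indicate) from the argument in the reference \cite{Gerben_Aut}. The strategy has two layers: first establish pointwise almost-sure convergence of $V^{\mathrm{enc}}_{\mathcal{D}_N}(\theta,\eta) - \mathbb{E}_e\{V^{\mathrm{enc}}_{\mathcal{D}_N}(\theta,\eta)\}$ for each fixed parameter vector $\xi = \mathrm{vec}(\theta,\eta)\in\Xi$, and then upgrade this to uniform convergence on the compact set $\Xi$ via an equicontinuity / covering argument.

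For the pointwise step, I would rewrite the empirical loss as $V^{\mathrm{enc}}_{\mathcal{D}_N}(\xi) = \frac{1}{N_{\mathrm{sec}}}\sum_{k} \ell_k(\xi)$, where
\begin{equation*}
\ell_k(\xi) = \sum_{\tau=0}^{T-1} \bigl\|y_{k+\tau} - \gamma^{\mathrm{pred}}_\tau(y^{k+\tau-1}_{k-n},u^{k+\tau-1}_{k-n},\xi)\bigr\|_2^2.
\end{equation*}
Condition \ref{assum:S1} on the data-generating Koopman form \eqref{eq:koop_model_structure}, together with quasi-stationarity of $u$ and independence from $e$, implies that $\{y_k\}$ is quasi-stationary with uniformly bounded fourth moments, and that initial condition effects decay geometrically. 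Applying Condition \ref{assum:M1} with the reference signals $\tilde y,\tilde u = 0$ yields a growth bound $\|\hat y^\mathrm{pred}_k(\xi)\|_2 \le c + c\sum_\tau \lambda^{k-\tau}(\|u_\tau\|_2+\|y_\tau\|_2)$, so the prediction error and hence $\ell_k(\xi)$ have uniformly bounded second moments. A strong law of large numbers for quasi-stationary processes (Ljung, Theorem 2B.1) then delivers $V^{\mathrm{enc}}_{\mathcal{D}_N}(\xi) - \mathbb{E}_e\{V^{\mathrm{enc}}_{\mathcal{D}_N}(\xi)\}\to 0$ a.s.~for each fixed $\xi$.

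To promote this to uniform convergence, I would exploit the derivative bound in Condition \ref{assum:M1}. Because $\gamma^{\mathrm{pred}}_\tau$ is differentiable on an open neighborhood $\breve\Xi$ of $\Xi$ and $\partial_\xi \gamma^{\mathrm{pred}}_\tau$ satisfies the same geometric decay estimate, the chain rule applied to $\|y_{k+\tau}-\hat y_{k+\tau|k}\|_2^2$ gives $|\ell_k(\xi_1)-\ell_k(\xi_2)| \le L_k \|\xi_1-\xi_2\|_2$ with a random Lipschitz constant $L_k$ whose expectation is uniformly bounded in $k$ (after a Cauchy--Schwarz split of the cross term into $\|y-\hat y\|_2$, controlled by the fourth moments from Condition \ref{assum:S1}, times $\|\partial_\xi \hat y\|_2$, controlled by Condition \ref{assum:M1}). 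Covering the compact set $\Xi$ with finitely many balls of radius $\epsilon/\bar L$, applying pointwise convergence at the centers (a finite union of null sets is null), and bounding oscillations within each ball by the average of $L_k$ yields
\begin{equation*}
\limsup_{N\to\infty} \sup_{\xi\in\Xi} |V^{\mathrm{enc}}_{\mathcal{D}_N}(\xi) - \mathbb{E}_e\{V^{\mathrm{enc}}_{\mathcal{D}_N}(\xi)\}| \le 2\epsilon \quad \text{a.s.},
\end{equation*}
and letting $\epsilon\downarrow 0$ through a countable sequence gives \eqref{eq:conv}.

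The main obstacle I anticipate is the joint handling of the geometric decay in Conditions \ref{assum:S1}--\ref{assum:M1} and the quasi-stationarity of $(u,y)$ when verifying that the random Lipschitz constants $L_k$ are uniformly integrable in $k$. Concretely, one must ensure that the convolution sums $\sigma(k)$ appearing in the predictor and its $\xi$-gradient, which mix past $u$ and $y$ values, produce summands with uniformly bounded second moments despite the fact that $y$ itself depends recursively on the noise $e$ through \eqref{eq:koop_model_structure}. Resolving this cleanly is where the quasi-stationarity assumption on $u$, the independence of $u$ and $e$, and the fourth-moment control from Condition \ref{assum:S1} must all be combined; beyond this, the rest of the argument is standard.
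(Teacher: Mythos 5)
Your proposal is correct and follows essentially the same route as the paper: the paper's proof is a one-line appeal to Ljung's convergence framework (verifying that the criterion \eqref{eq:pred:enc} satisfies Condition C1 of \cite{ljung1978convergence} and invoking Lemma 3.1 there), and Conditions \ref{assum:S1} and \ref{assum:M1} are precisely the stability hypotheses that make that lemma applicable. What you have written out — pointwise strong law for the quasi-stationary loss followed by a Lipschitz/covering argument over the compact set $\Xi$ using the derivative bound in Condition \ref{assum:M1} — is the internal structure of that cited lemma, so you have simply unpacked the reference rather than taken a different path.
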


\begin{proof}
As the identification criterion \eqref{eq:pred:enc} satisfies Condition C1 in \cite{ljung1978convergence}, the proof of \cite[Lemma 3.1]{ljung1978convergence} applies to the case considered.  \end{proof}

\subsection{Consistency} \label{sec:consistency}
To formally show consistency, the Koopman form \eqref{eq:koop_model_structure} of the data-generating system must belong to the chosen set of models $\mathcal{M}$. This means that there exists a $\xi_\mathrm{o} \in \Xi$ such that the one-step-ahead predictor $\gamma^\mathrm{pred}_k$ associated with $M_{\xi_\mathrm{o}}$ and the Koopman form \eqref{eq:koop_model_structure} of the data-generating system are the same. Unfortunately, a system can have many equivalent state-space representations; hence, even if the estimator converges in terms of \eqref{eq:conv}, it can do so to just a $\xi$ that corresponds to a different state-space representation expressing the same dynamics. Therefore, we need to understand consistency w.r.t.~an equivalence class of \eqref{eq:koop_model_structure}. For this purpose, introduce $\Xi_\mathrm{o} \subset \Xi$, which contains all $\xi_\mathrm{o} \in \Xi_\mathrm{o}$ that correspond to equivalent models of the data-generating system in the one-step-ahead prediction sense. Note that if $\Xi_\mathrm{o} = \varnothing$, then chosen parameterization based  $\mathcal{M}$ can not describe the true \eqref{eq:koop_model_structure} and consistency cannot hold.

Before arriving at our result, we need to make sure that the data contains enough information to recover the true underlying dynamics:
\begin{condition} 
    \label{assum:PE} For the given model set $\mathcal{M}=\{ M_\xi \mid \xi\in\Xi\} $ with $\xi = [\begin{array}{cc} \theta^\top & \eta^\top \end{array}]^\top$, we call the input sequence $\{u_k\}_{k=0}^{N-1}$ in $\mathcal{D}_{N}$ generated by the Koopman form \eqref{eq:koop_model_structure} of the data-generating system {\it weakly persistently exciting}\footnote{\rev{Note that the notion of persistency of excitation used here is in line with the classical notion of informativity in system identification, see \cite{book_ljung}, and it implies distinguishability (under the data $\mathcal{D}_{N}$) of the achieved cost and the predictors corresponding to the used model structure  for \revtwo{$\xi\in\Xi$} values that do not correspond to equivalent models.}}, if for all pairs of parameters given by $\xi_1\in\Xi$ and $\xi_2\in\Xi$ for which the function mapping is unequal, i.e., $V_{(\cdot)}^\mathrm{enc}(\theta_1,\eta_1) \neq V_{(\cdot)}^\mathrm{enc}(\theta_2,\eta_2)$, we have
    \begin{align}
        V_{\mathcal{D}_{N}}^\mathrm{enc}(\theta_1,\eta_1) \neq V_{\mathcal{D}_{N}}^\mathrm{enc}(\theta_2,\eta_2)
    \end{align}
    with probability 1.
\end{condition}

Next, to prove consistency, all elements of $\Xi_\mathrm{o}$ must have minimal asymptotic cost in terms of $\lim_{N\rightarrow \infty} V_{\mathcal{D}_N}^\mathrm{enc}(\theta,\eta)$. However, due to the prediction error nature of the used $\ell_2$-type loss function together with the existence of $\mathbb{E}_e \{ V_{\mathcal{D}_{N}}^\mathrm{end}(\theta,\eta) \}$ (shown in Theorem \ref{lem:convergence}), the minimal asymptotic cost property of $\Xi_\mathrm{o}$ is satisfied by $V_{\mathcal{D}_N}^\mathrm{enc}$. For a detailed proof, see \cite{ljung1978convergence}. 

\begin{theorem} 
\label{lem:consistency}
Under the conditions of Theorem \ref{lem:convergence} and Condition \ref{assum:PE}, 
    \begin{align}
        \underset{N \rightarrow \infty}{\lim} \hat{\xi}_N \in \Xi_\mathrm{o}
    \end{align}
    with probability 1, where
    \begin{align}
        \hat{\xi}_N =\underset{\mathrm{vec}(\theta,\eta)\in\Xi} { \arg \min} 
        V_{\mathcal{D}_{N}}^\mathrm{enc}(\theta, \eta).
    \end{align}
\end{theorem}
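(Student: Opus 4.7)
My plan is to follow the standard Ljung-style consistency argument, adapting the approach of \cite{ljung1978convergence} already used in the proof of Theorem \ref{lem:convergence}. The argument proceeds in four steps.

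First, I would invoke Theorem \ref{lem:convergence} to obtain almost sure uniform convergence of $V^{\mathrm{enc}}_{\mathcal{D}_N}(\theta,\eta)$ to the limit cost $\bar V(\theta,\eta):=\lim_{N\to\infty}\mathbb{E}_e\{V^{\mathrm{enc}}_{\mathcal{D}_N}(\theta,\eta)\}$ over the compact set $\Xi$. Combined with continuity of the parametrized predictor in $\xi$ (which follows from the differentiability built into Condition \ref{assum:M1}), a standard $\mathrm{argmin}$-continuity result then yields that $\hat\xi_N$ is asymptotically contained in every neighborhood of $\Xi_\star := \mathrm{argmin}_{\xi\in\Xi}\bar V(\xi)$, almost surely. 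The remaining work is therefore to show $\Xi_\star = \Xi_\mathrm{o}$.

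Second, I would characterize $\Xi_\star$ via the standard prediction-error decomposition. Writing $\hat y^\mathrm{o}_{k+\tau|k}$ for the one-step-ahead predictor associated with the true Koopman form \eqref{eq:koop_model_structure} of the data-generating system (which belongs to $\mathcal{M}$ since, by hypothesis, $\Xi_\mathrm{o}\neq\varnothing$), the innovation $e_{k+\tau}$ is orthogonal in $L^2$ to any measurable function of past $(u,y)$. This gives
\begin{equation*}
\mathbb{E}_e\bigl\{\|y_{k+\tau}-\hat y_{k+\tau|k}\|_2^2\bigr\} \;=\; \mathbb{E}_e\bigl\{\|e_{k+\tau}\|_2^2\bigr\} \;+\; \mathbb{E}_e\bigl\{\|\hat y^\mathrm{o}_{k+\tau|k}-\hat y_{k+\tau|k}\|_2^2\bigr\},
\end{equation*}
so that the first term is independent of $\xi$ and $\bar V$ is minimized precisely when the model predictor matches the optimal one almost surely along trajectories, i.e.~on $\Xi_\mathrm{o}$. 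Thus $\Xi_\mathrm{o}\subseteq\Xi_\star$.

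Third, I would use Condition \ref{assum:PE} to upgrade this to an equality. Without persistence of excitation, two parameter values with genuinely different predictor mappings could still yield the same empirical cost on a particular input realization, creating spurious minimizers outside $\Xi_\mathrm{o}$. Weak persistence of excitation rules this out: any $\xi\in\Xi_\star$ must produce the same $V^{\mathrm{enc}}_{\mathcal{D}_N}$-value as some $\xi_\mathrm{o}\in\Xi_\mathrm{o}$ for every $N$, and then Condition \ref{assum:PE} forces the associated predictor mappings to coincide, placing $\xi$ in $\Xi_\mathrm{o}$. Combining this with step one gives $\mathrm{dist}(\hat\xi_N,\Xi_\mathrm{o})\to 0$ almost surely, which is the claim.

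The main obstacle I anticipate is step three: carefully bridging the gap between \emph{equality of the finite-sample empirical cost} and \emph{equality of the underlying predictor functions}. Condition \ref{assum:PE} is tailored precisely for this, but the conclusion must be stated as set-valued convergence to $\Xi_\mathrm{o}$ rather than pointwise convergence to a unique parameter, since $\Xi_\mathrm{o}$ is generally a non-trivial manifold reflecting state-space similarity transformations, sign/ordering ambiguities in the lifted coordinates and encoder reparametrizations that leave the input--output map invariant. Provided this set-valued framing is adopted, everything else reduces to the Ljung framework and the argument closes directly via \cite[Thm.~4.1]{ljung1978convergence}.
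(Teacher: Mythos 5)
Your proposal is correct and follows essentially the same route as the paper: the paper argues in the paragraph preceding the theorem that $\Xi_\mathrm{o}$ attains the minimal asymptotic cost due to the prediction-error ($\ell_2$/innovation-orthogonality) structure of the loss, and then disposes of the remaining steps --- uniform convergence from Theorem \ref{lem:convergence}, exclusion of spurious minimizers via Condition \ref{assum:PE}, and set-valued convergence of $\hat\xi_N$ to $\Xi_\mathrm{o}$ --- by a direct appeal to Lemma 4.1 of \cite{ljung1978convergence}, which is exactly the argument you have unpacked. Your explicit treatment of the $L^2$-orthogonality decomposition and of the set-valued (rather than pointwise) convergence matches the intended reading of the paper's compressed proof.
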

\begin{proof}
The proof is a direct application of Lemma 4.1 in \cite{ljung1978convergence} because the loss function \eqref{eq:pred:enc} fulfills Condition (4.4) in \cite{ljung1978convergence}. 
\end{proof}

\revtwo{Note that the provided consistency proof assumes a finite-dimensional Koopman system representation that has a finite, but possibly arbitrarily large, lifted state dimension $n_\mathrm{z}$. It is nevertheless possible to extend this reasoning for the limit of $n_\mathrm{z}$ growing to infinity. However, as the number of parameters $n_\theta$ 
grow for a growing number of lifted states, the data length $N$ needs to grow with a faster rate than the number of parameters $n_\theta$, i.e. the ratio $\frac{N}{n_\theta}$ needs to tend infinity. Examples of such a setting can for instance be found in the identification of \emph{frequency response functions} (FRF), where FRFs with infinitesimally small resolution, and hence, infinite number of parameters, can be identified as long as both the number of data points per period, and the number of periods are growing towards infinity \cite{Pintelon_12}}.

\subsection{Discussion}
\revtwo{The consistency result essentially proves that, when the data set grows to infinity ($N\rightarrow\infty$), the identified model converges to an equivalent \revtwo{Koopman} representation \eqref{eq:koop_model_structure} of the original nonlinear system \eqref{eq:data_gen} with probability 1. \revtwo{This means that the estimation error of the Koopman model tends to zero asymptotically as $N\rightarrow \infty$.} This holds for representations with arbitrarily large lifting dimension $n_\mathrm{z}$ that are assumed to introduce no projection error. In the literature, such an assumption on the projection error is typically used. For example, in the autonomous case, multiple works tackle estimation error, usually either by directly investigating the projected operator \cite{finite_data_err_bounds}, working in the infinite-dimensional setting of the RKHS or approximating the action of the Koopman operator via a finite Mercer series expansion \cite{kernel_bounds}, or by assuming invariance via specific kernel choices \cite{kohne_kedmd_bound}. For systems with input, the lifted representation is most often assumed to be bilinear \cite{finite_data_err_bounds}, \cite{schaller_proj_bounds}, \cite{straesser_bilinear_err_bounds_ieee}, with full state availability. Also, noise is usually lacking in the works that investigate error bounds, with the exception of simplistic noise assumptions (e.g. Brownian motion in stochastic differential equations \cite{finite_data_err_bounds}), or only measurement noise in the autonomous case in the infinite data and infinite dictionary limits \cite{llamazareselias}.  In comparison, this work introduces an identification method to estimate Koopman models of nonlinear systems driven by external input and affected by process and measurement noise, based on analytically correct model structures with general complexity and under the practical setting of using input-output data only without any direct state measurement. To the authors' knowledge, the consistency result in \Cref{sec:convergence_and_consistency} provides the first proof of the estimation error of the Koopman model in the presence of both input, process, and measurement noise, under the assumption of no projection error. An exciting future direction in this respect is to extend the current results to formulate finite sample bounds, e.g., based on the results of \cite{1024346}.}
\par

\revtwo{While in this paper we do not explicitly treat the projection error of inexact finite dimensional embedding, combining the current consistency result described in \Cref{sec:convergence_and_consistency} with existing methods
that quantify the \revtwo{projection} error when Assumption \ref{assumption:exact_embedding_aut} does not hold can be used to derive joint error bounds 
through \cref{lem:corollary_approx}.
For example, characterization or mitigation of the $n$-step reconstructabiltiy error could be achieved via a reprojection approach as discussed in \cite{Pvangoor}. Alternatively, \cref{lem:corollary_approx} shows that one could possibly use an identification method that characterizes the approximation error of the autonomous embedding, including both estimation and projection errors (or circumventing the latter via specific kernel choices) \cite{kohne_kedmd_bound,Parachuri, kernel_bounds, schaller_proj_bounds,straesser_bilinear_err_bounds_ieee,yadav_mauroy_berstein}, to fully derive an error bound for the input-driven and noise-affected Koopman representation of \eqref{eq:data_gen_x}, when the analytical derivation in \cref{lem:corollary_approx} is used. This would allow for a much simpler result than in \cite{straesser_bilinear_err_bounds_ieee}, where, for a bilinear system with no noise, the residual depends on both state and input. } 

\section{Experiments and results}\label{section_experiments} 
Next we test the proposed Koopman model identification approach on a simulation study of a nonlinear Wiener-Hammerstein system that admits an exact Koopman embedding and the publicly available Bouc-Wen oscillator-based identification benchmark
that has hysteretic behavior and it is notoriously hard to identify. Finally, we apply our approach to capture a Koopman form of the the flight dynamics of a Crazyflie 2.1 nano quadcopter using measured flight data.

\subsection{Wiener-Hammerstein system}\label{sec:wiener_hammer}
To illustrate the performance of the proposed Koopman model structure and learning method as well as the ability to handle process noise, we consider a Wiener-Hammerstein system described by the interconnection of 2 \textit{single-input single-output} (SISO) LTI blocks and a polynomial nonlinearity, as shown in Fig.~\ref{fig:block_figure}. The dynamics of the first block are
\begin{figure}[!tb]
    \centering
    \begin{minipage}{.8\textwidth}
        \centering
        \includegraphics[width=0.95\textwidth]{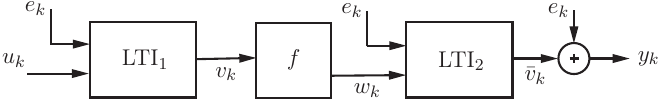}
        \caption{Wiener-Hammerstein system}\vspace{-.5cm}
        \label{fig:block_figure}
    \end{minipage}%
\end{figure}
\begin{subequations}\label{eq:lti_block_dyn_1}
	\begin{align}
		x_{k+1}&=A_1x_k+B_1u_k+K_1e_k \\
v_k &= C_1x_k
	\end{align}
\end{subequations}
with $x_k\in\mathbb{R}^{n_\mathrm{x}}$ the state vector, $u_k\in\mathbb{R}$ the input, and $e_k\sim \mathcal{N}(0,\sigma_\mathrm{e}^2)$ being an i.i.d.~white noise process with standard deviation $\sigma_\mathrm{e}>0$, while $A_1\in\mathbb{R}^{n_\mathrm{x}\times n_\mathrm{x}}$, $B_1\in\mathbb{R}^{n_\mathrm{x}}$, $K_1\in\mathbb{R}^{n_\mathrm{x}}$ and $C_1\in\mathbb{R}^{1\times n_\mathrm{x}}$. The output $v_k\in\mathbb{R}$ of \eqref{eq:lti_block_dyn_1} is affected by 
\begin{equation} \label{eq:nl}
	w_k = f(v_k)=\alpha_0 + \alpha_1v_k + \alpha_2 v^2_k + \alpha_3  v^3_k,
\end{equation}
with $\{\alpha_i\}_{i=1}^3\subset\mathbb{R}$. As $v_k = C_1x_k$, \eqref{eq:nl} can be written as:
\begin{equation}
f(v_k)=f(C_1x_k)=\alpha_0+ \alpha_1C_1x_k + \alpha_2 C^{(2)}_1x_k^{(2)} + \alpha_3C^{(3)}_1x_k^{(3)}.
\end{equation}
We denote by $^{(i)}$ the Kronecker power, i.e., $C^{(3)}=C_1 \otimes C_1 \otimes C_1$, where $\otimes$ is the Kronecker product. Finally, the second linear block is described as 
\begin{subequations}\label{eq:lti_block_dyn_2}
\begin{align}
\bar{x}_{k+1} &= A_2 \bar{x}_k + B_2w_k + K_2 e_k\\
y_k &= \underbrace{C_2\bar{x}_k}_{\bar{v}_k} + e_k
\end{align}
\end{subequations}
with $\bar{x}_k\in\mathbb{R}^{n_\mathrm{\bar{x}}}$ the state vector and with  matrices similarly defined as for the first LTI block. With $n_\mathrm{x}=n_{\bar{\mathrm{x}}}=2$, the exact numerical values of the matrices and the polynomial coefficients are given in \cite{Iacob_DT_WH_embedding}. The system described by \eqref{eq:lti_block_dyn_1}--\eqref{eq:lti_block_dyn_2} can be exactly represented by a finite dimensional Koopman model \eqref{eq:koop_model_structure}. For brevity, we refer the reader to \cite{Iacob_DT_WH_embedding} for the derivations,  which uses a similar finite dimensional conversion approach to \cite{Iacob_CT_block_embedding}. The resulting lifted state \revtwo{of the exact Koopman model has dimension $n_\mathrm{z}=12$}.

To generate data, the input is considered as a white noise process with uniform distribution $u_k\sim \mathcal{U}(-1,1)$, \rev{independent} of $e$, while the standard deviation $\sigma_\mathrm{e}$  of $e$ is chosen to obtain $5-30$ dB levels of \emph{signal-to-noise ratio} (SNR) at the output. Based on this,  train, validation and test data sets are generated of length $N=12000$, 4000 and 4000, respectively, with independent realizations of $u$ and $e$. 

\begin{figure}[!htb]
    \centering
    \begin{minipage}{.85\textwidth}
    \centering
        \includegraphics[width=0.9\textwidth]{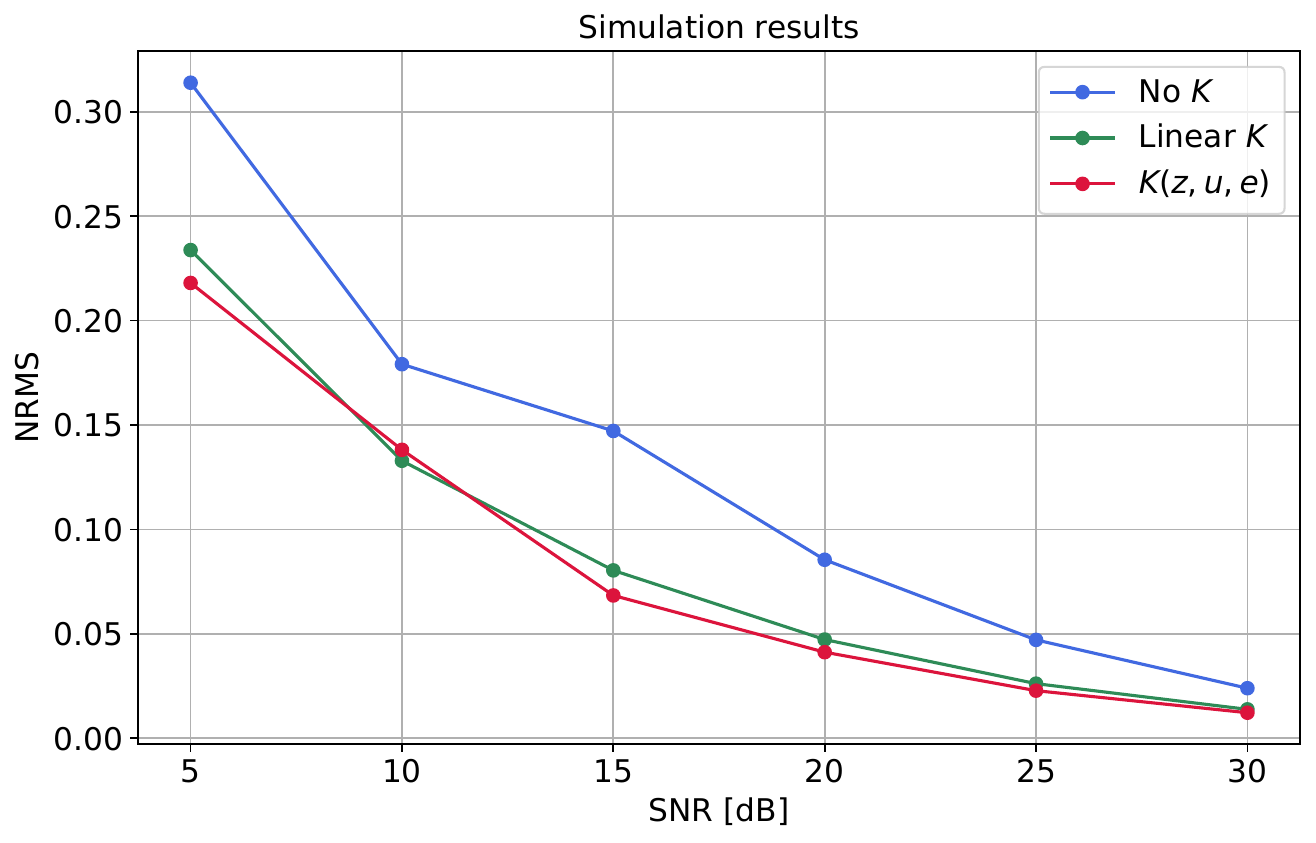}
        \caption{NRMS of the simulation responses of the process part of the Koopman models  w.r.t.~a noiseless test data set, when the Koopman models are estimated with noisy data under various SNR levels (Wiener-Hammerstein system). } 
        \label{fig:WH_simulation_results}
    \end{minipage}%
    \vspace{-4mm}
\end{figure}

\begin{figure}[!htb]
	\centering
	\begin{minipage}{0.85\textwidth}
        \centering
        \includegraphics[width=0.9\textwidth]{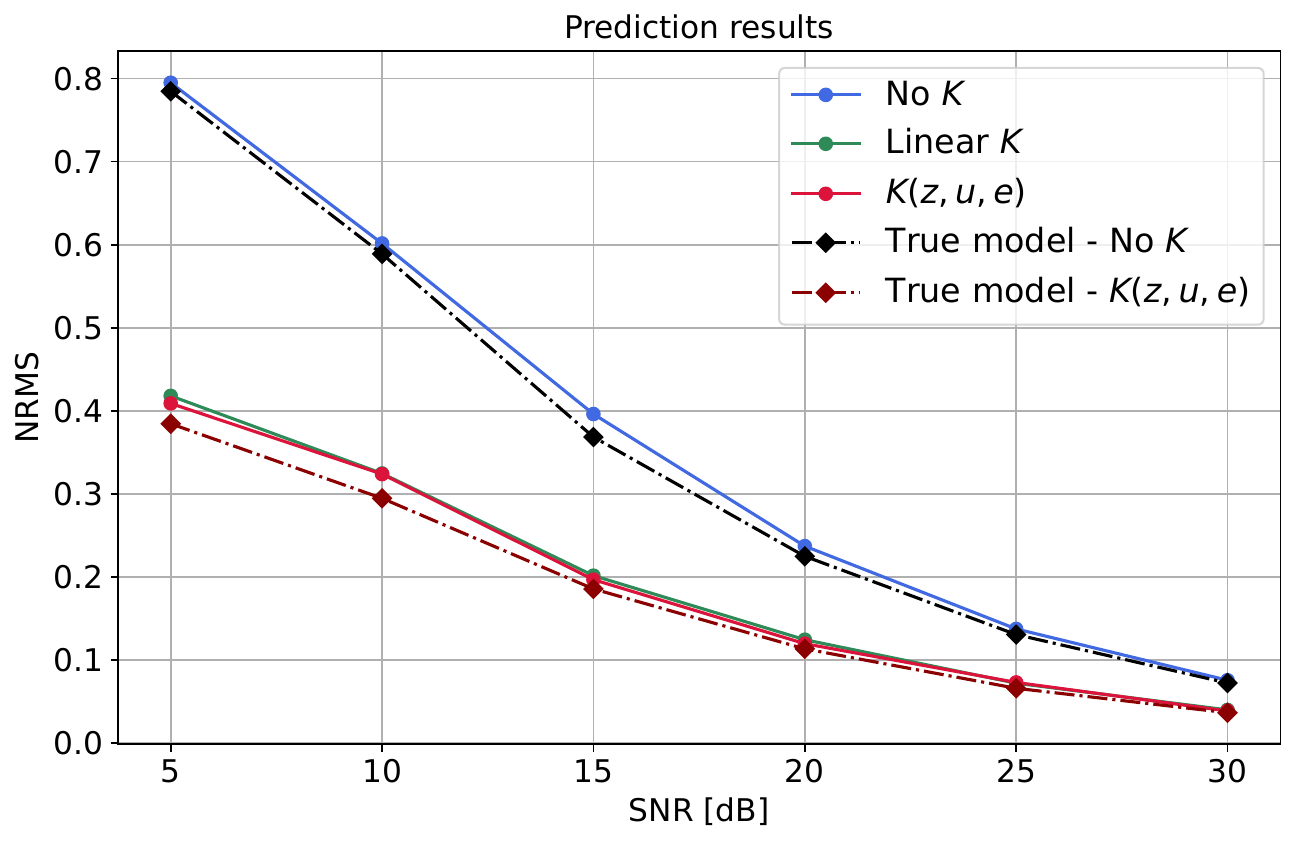}
        \caption{NRMS of the one-step-ahead prediction by the Koopman models w.r.t. noisy test data sets with different SNRs, when the Koopman models are also estimated with noisy data under these SNR levels (Wiener-Hammerstein system).}
        \label{fig:WH_filtering_results}
    \end{minipage}
    \vspace{-4mm}
\end{figure}

In the considered Koopman model structure $M_\xi$, defined by \eqref{eq:koopman_parametrized_model} and \eqref{subenc:2}, 
the encoder $\bar \Pi^\eta_{\mathrm{z},n}$, the input function $B_\theta$ and the innovation noise structure $K_\theta$ are parametrized as feedforward neural networks with 1 hidden layer, tanh activation and 40 neurons per layer for the encoder and $B_\theta$ function, and 80 neurons for $K_\theta$, while $A_\theta$ and $C_\theta$ are taken as parametrized matrices (also $K_\theta$ in the linear case). The parameters are initialized using Xavier initialization and we employ early stopping. The lifting dimension \revtwo{model structure} is selected to coincide with the exact \revtwo{Koopman embedding}, i.e., $n_z=12$ and we use an encoder lag of $n=12$. The prediction horizon is chosen to be  $T=51$ with a batch size of 256. For training, Adam optimization \cite{adam} is used with the obtained training and validation data sets and with a learning rate of $\alpha=10^{-3}$ and the exponential decay rates set to $\beta_1 = 0.9$ and $\beta_2 = 0.999$. The quality of the obtained models are assessed in terms of the \textit{normalized root mean square} (NRMS) and RMS errors:
\begin{equation}\label{eq:NRMS}
\mathrm{NRMS}=\frac{\mathrm{RMS}}{\sigma_{y}}=\frac{\sqrt{\frac{1}{N-n} \sum_{k=n}^{N}\left\|\hat{y}_{k}-y_{k}\right\|_{2}^{2}}}{\sigma_{y}}
\end{equation}
where  $\sigma_y$ is the sample standard deviation of $y$, giving $\text{NRMS}\in[0,1]$. Note that the first $n$ steps are skipped in \eqref{eq:NRMS} as they are used for the encoder function. \par
Fig.~\ref{fig:WH_simulation_results} shows the simulation performance of the identified models on noiseless test data when trained on noisy data of particular SNRs. Models with various choices of $K_\theta$ are also compared: no $K_\theta$, linear $K_\theta$, and general $K_\theta(\hat{z}_k,u_k,\hat{e}_k)$. It can be seen that the linear innovation noise structure is able to reduce the \revtwo{estimation} 
error 
\revtwo{by up to factor}
two and slight improvements are obtained with the more complex noise model. Next, we analyse the one-step-ahead prediction performance. Similar to the simulation test case, Fig.~\ref{fig:WH_filtering_results} compares the NRMS \revtwo{of the prediction performance} 
of \revtwo{the estimated models w.r.t.} noisy test data. As in the simulation scenario, the models are trained on noisy data with the respective SNRs. To provide context for these results, two additional dashed reference lines are included in the figure which represent the one-step-ahead prediction error of the true model, showing the proven convergence and consistency properties as the noise diminishes. It is also clear that incorporating a linear $K_\theta$ significantly reduces the NRMS error compared to the baseline model (which is correctly identified) without an innovation noise structure. In overall, both the simulation and prediction results highlight the importance of the innovation noise structure in increasing accuracy of the identified Koopman models.\par

\subsection{Bouc-Wen benchmark} \label{sec:bouc_wen} 
The Bouc-Wen oscillator benchmark \cite{BW_dataset,NL_benchmark,Schoukens_3_bench} is extensively used for testing capabilities of nonlinear system identification approaches as it describes a system with hysteresis which is a challenging behavior to capture from data accurately. The Bouc-Wen oscillator can be modeled as:
\begin{equation}\label{eq:BW_nl_model}
	m_L\ddot{y} + r(y,\dot{y})+q(y,\dot{y})=u,
\end{equation}
with $m_L$ the mass, $y$ its displacement, $\dot{y}$ the velocity, and $u$ the external force applied to the system. The restoring force $r(y,\dot{y})$ is linear while $q(y,\dot{y})$ is a dynamic nonlinear function describing the hysteresis curve. Both these effects are extensively described in \cite{Schoukens_3_bench} where the parameters are chosen for the Bouc-Wen benchmark such that the hysteretic behavior is significantly present.
\par To estimate a Koopman model of the Bouc-Wen benchmark system, the training dataset contains input-output data generated by using two sinusoidal signals with frequencies of 1 and 4 Hz, as well as four random phase multisine signals with frequency bands ranging from 1 to 150 Hz. For validation, a dataset is generated with an input containing a multisine signal also with an excited frequency range between 1 and 150 Hz followed by a sinusoidal signal at 2 Hz. For test data, the simulated response is generated by multisine and sinesweeep signals as given in \cite{Schoukens_3_bench}, as well as a 3 Hz sinusoidal signal to showcase the hysteretic behavior. Note that, as detailed in \cite{Schoukens_3_bench}, the data is sampled at a frequency of $f_\mathrm{s}=750$ Hz. These used detasets are shown in Fig. \ref{fig:BW_train_val_test}. As the data is noiseless, the estimation of a noise model is not required which means that $K_\theta$ is set to $0$. \par 
To estimate a Koopman model of the system, the encoder $\bar\Pi^\eta_{\mathrm{z},n}$ and the input matrix function $B_\theta$ are parametrized as feedforward neural networks with one hidden layer, having tanh activation and 40 neurons and we use $n=5$ number of lags. The network parameters are initialized as in the previous example. The prediction horizon value is set to $T=101$ with a batch size 256. For training, a learning rate of $\alpha=10^{-3}$ and exponential decays $\beta_1=0.9$ and $\beta_2=0.999$ are used. \par
\begin{figure}[tb!]
    \centering
    \begin{minipage}[b]{0.87\linewidth}
        \centering
        \includegraphics[width=\linewidth]{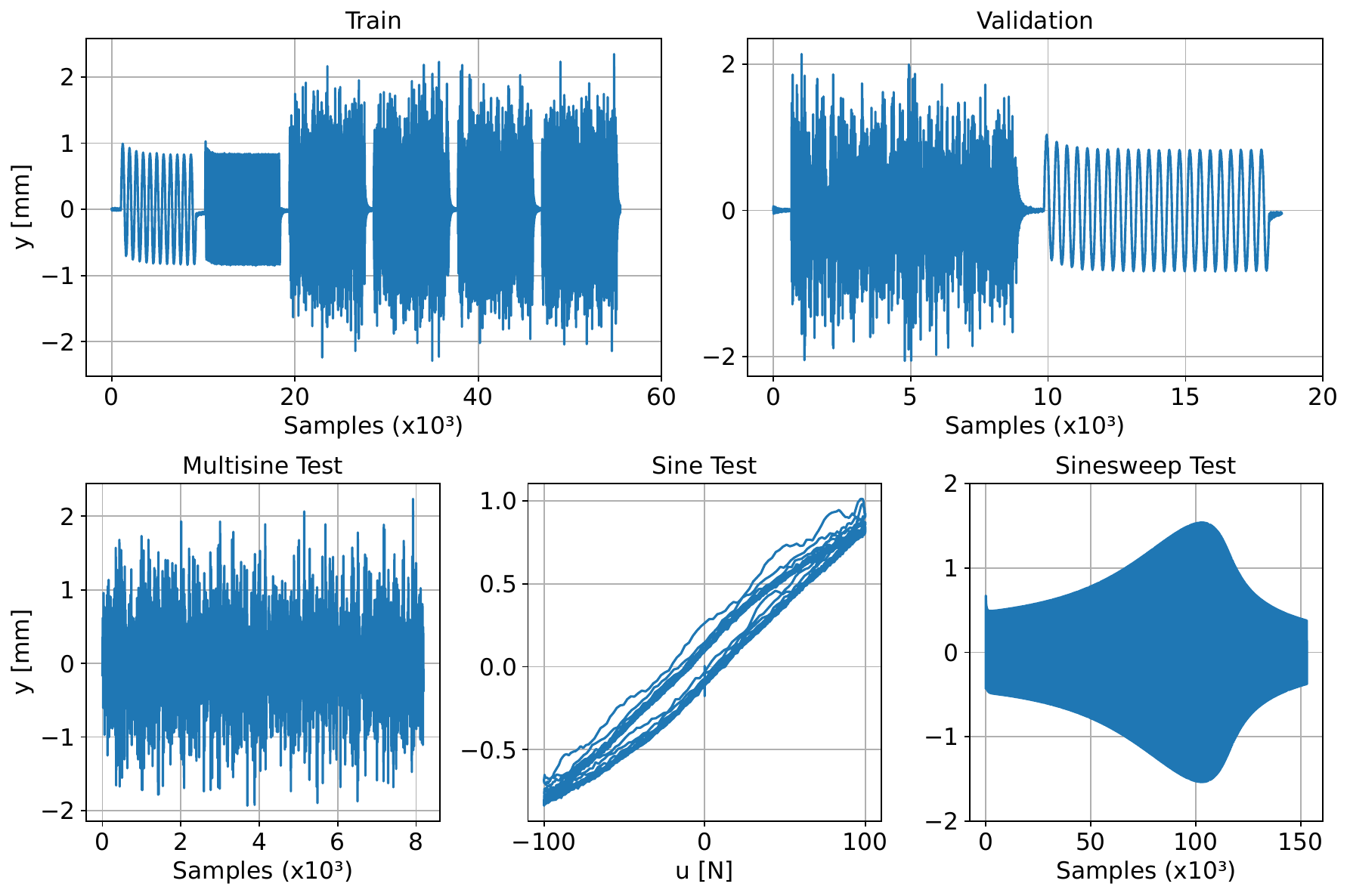}
    \end{minipage}
    \caption{Train, validation and test (multisine, sine and sinesweep) datasets used for the experiments (Bouc-Wen benchmark).}
    \label{fig:BW_train_val_test}
\end{figure}

\begin{figure}[tb!]
    \centering
    \begin{minipage}[b]{0.87\linewidth}
        \centering
        \includegraphics[width=\linewidth]{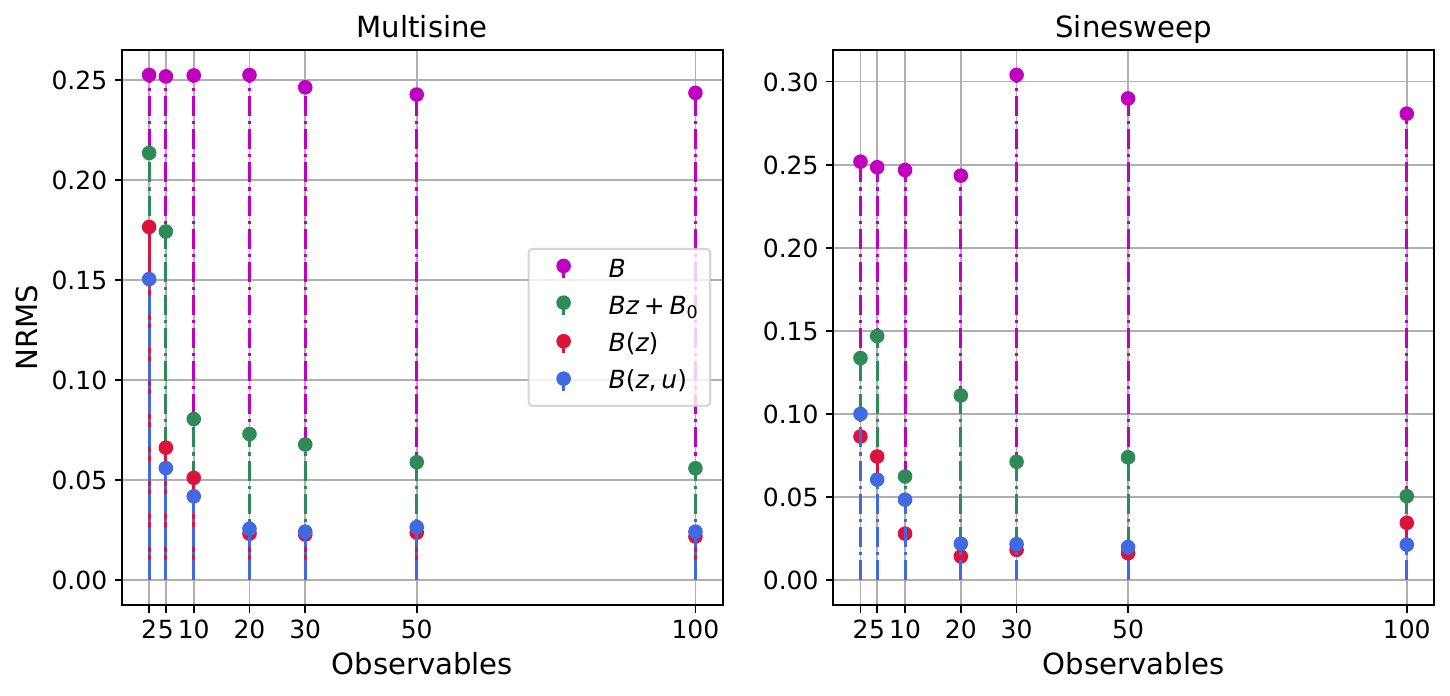}
    \end{minipage}
    \caption{Overview of NRMS errors of identified Koopman models with different complexities of $B$ (linear, bilinear, input affine, and general) and increasing lifting dimension using the multisine and sinesweep test datasets (Bouch-Wen benchmark).}
    \label{fig:BW_results}
\end{figure}

\begin{figure}[tb!]
    \centering
    \begin{minipage}[b]{\linewidth}
        \centering
        \includegraphics[width=\linewidth]{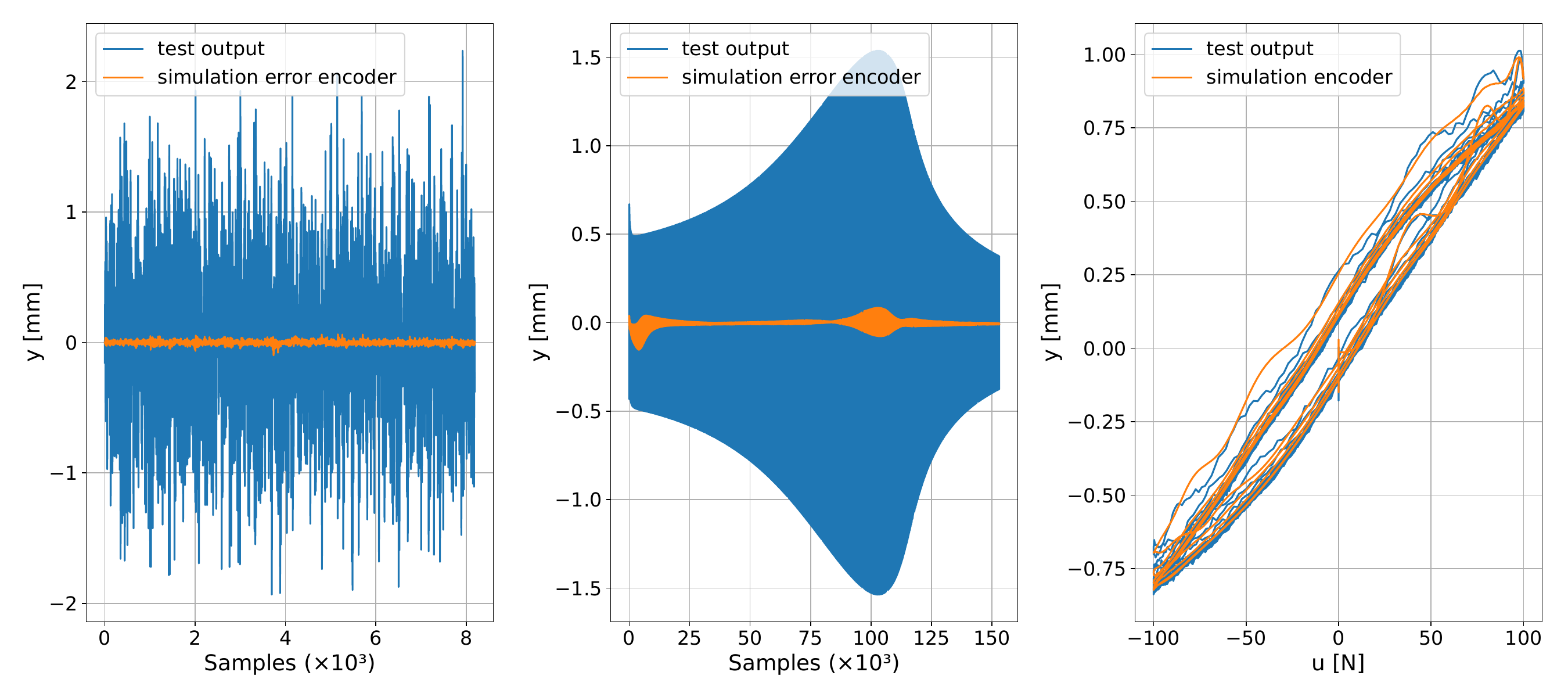}
    \end{minipage}
    \caption{Simulated output responses of the estimated model ($B(z)$ structure with $n_\mathrm{z}=100$) on the test data under multisine (left) and sinesweep (centre) inputs and hysteretic behavior (right) (Bouch-Wen benchmark).}\vspace{-.5cm}
    \label{fig:BW_subplots_results}
\end{figure}

Next, we show how different complexities in the $B$ function as well as the lifting dimension affect the approximation capabilities of Koopman models. This is shown in Fig.~\ref{fig:BW_results}, for both the multisine and sinesweep tests, using a number of observables $n_\mathrm{z}\in\{3,5,10,20,30,50,100\}$. The linear Koopman model with constant $B$ performs the worst, showing no significant improvement for larger $n_\mathrm{z}$. The \emph{bilinear} (BLTI) model shows a strong increase in accuracy with larger lifting dimensions however, the overall improvement from $n_\mathrm{z}=10$ to $n_\mathrm{z}=100$ drastically slows down. The best performing models are the input affine (with $B(z)$) and  full dependency (with $B(z,u)$) models, which demonstrate good approximation capability at only a relatively small lifting dimension, (e.g., $n_\mathrm{z}=20$). It can be seen that one can trade complexity with lifting dimension and vice-versa. For example, a bilinear model is generally a good trade-off between lifting dimension and approximation capability, while better approximation results can be obtained with input affine or general models at lower dimensions (e.g. $n_\mathrm{z}=20$ in this example) at the cost of model complexity. We do note that, somewhat nonintuitively, we obtain the lowest error with the input affine model (see Table \ref{table:BW_comparison}), instead of the general model structure, which is due to the increased size of the parameerization and complexity of the optimization problem. Moreover, as we utilize early stopping, it is possible that running the learning \revtwo{process} longer or optimizing the learning rates may produce \revtwo{slightly} better results.  However, a general conclusion is that a fully LTI \revtwo{Koopman} model is unable to capture the system dynamics. \par
The best performing model in the multisine case is the model structure with input affine complexity and a lifting dimension of $n_\mathrm{z}=100$ (although a close result is also obtained with $n_\mathrm{z}=20$). We use this model for comparing the results against other available methods. In Fig.~\ref{fig:BW_subplots_results}, the simulation results using the test data show a low error for the multisine and sinesweep input excitation, and clearly show that the hysteretic behavior is well captured. In Table \ref{table:BW_comparison}, we can see that the simulation performance of the obtained Koopman model is close to the state of the art (for the interested reader, the other methods are described in more detail in \cite{PLNLSS,Improved_NN_Maarten}). The approaches that obtain slightly better results do not impose a particular structure on the learned model. The Koopman model \eqref{eq:koop_model_structure} is able to accurately capture the system behavior and offers a good overall trade-off between state dimension and model complexity. Furthermore, even the more complex structures, i.e., input affine or general, can be cast into LPV representations \cite{Aut_Iacob_inputs} for which there exist convex and computationally efficient control methods \cite{Mohammadpour}. 

\begin{table}[tbp]
\resizebox{\textwidth}{!}{\begin{tabular}{||c||c|c|c|c||}
\hline Method & gr-SS-NN & SS-NN Suykens Impr & SS-NN Suykens & Poly-NL-SS \\
\hline \hline RMS  & $7.53 \times 10^{-6}$ & $8.91 \times 10^{-6}$ & $2.65 \times 10^{-5}$ & $1.34 \times 10^{-5}$ \\
\hline \hline \hline Koopman structure & Linear & Bilinear & Input affine & General \\ 
\hline\hline RMS & $1.60 \times 10^{-4}$ & $3.69 \times 10^{-5}$ & $1.43 \times 10^{-5}$ & $1.59 \times 10^{-5}$\\ 
\hline\hline Nr. observables & $n_\mathrm{z}=50$ & $n_\mathrm{z}=100$ & $n_\mathrm{z}=100$ & $n_\mathrm{z}=50$\\ 
\hline
\end{tabular}}
\caption{Comparison  of the estimated Koppman models with state of the art \revtwo{identification} methods applied on the Bouch-Wen benchmark in terms of achieved simulation RMS on multisine test data.}\vspace{-.5cm}
\label{table:BW_comparison} \vspace{-4mm}
\end{table}


\subsection{Quadrotor example}\label{sec:quadrotor_exp}

In this section we demonstrate the applicability of the proposed Koopman model identification approach on capturing the flight dynamics of a Crazyflie 2.1 quadrotor. We first show simulation results, followed by an experimental study on the real-world system.

\subsubsection{Simulation study}\label{sec:quad_sim_exp}

The considered simulation model of the drone implements the rigid body dynamics as described in \cite{mahony_multirotor} and uses three coordinate frames: \emph{north-east-down} (NED) oriented inertial frame $F_\mathrm{i}$; the vehicle frame $F_\mathrm{v}$ (origin  at the centre of gravity of the quadrotor) sharing the same orientation as $F_\mathrm{i}$ ; the body frame  $F_\mathrm{b}$ (orientation fixed to the quadrotor) whose origin coincides with $F_\mathrm{v}$. The model has 12 motion states composed of the position $\revtwo{s} = [\ \revtwo{s_\mathrm{x}} \ \ \revtwo{s_\mathrm{y}} \ \ \revtwo{s_\mathrm{z}}\ ]^\top$, translational velocity $v = [\ v_\mathrm{x} \ \  v_\mathrm{y} \ \  v_\mathrm{z}\ ]^\top$ expressed in $F_\mathrm{i}$,  $\zeta = [ \ \phi \ \  \theta \ \ \psi\ ]^\top$ describing the orientation as Z-Y-X Euler angles in $F_\mathrm{v}$, and $\omega = [ \ p \ \ q \ \ r ]^\top$, describing the rotational velocity of $F_\mathrm{b}$ w.r.t.~$F_\mathrm{v}$, given in $F_\mathrm{b}$. The inputs to the system are the total thrust $T$ and the torque vector $\tau = [\ \tau_\phi \ \ \tau_\theta \ \ \tau_\psi]^\top$ both given in $F_\mathrm{b}$ and produced by the four rotors. The diagonal values of the inertia matrix of the drone are set to \( J_\mathrm{x} = J_\mathrm{y} = 1.4 \times 10^{-5}\, \text{kg}\!\cdot\!\text{m}^2 \), \( J_\mathrm{z} = 2.17 \times 10^{-5}\, \text{kg}\!\cdot\!\text{m}^2 \),  and the off-diagonal values are 0, the mass is \( m = 0.027\, \text{kg} \), and \( g = 9.8\, \text{m/s}^2 \), which are experimentally obtained using a real-world Crazyflie 2.1 quadrotor. The simulation is performed with Runge-Kutta 4 integration at a sampling rate of 48 Hz, accurately replicating the expected flight-dynamics.

As the system is unstable, flight-trajectories are generated by using a gain-scheduled \emph{linear quadratic regulator} (LQR) controller, designed w.r.t. the local linearisations at each time step of the simulation model. 
The LQR is scheduled based on a desired state trajectory that is calculated for a $\revtwo{s_\mathrm{x}}$-$\revtwo{s_\mathrm{y}}$-$\revtwo{s_\mathrm{z}}$-$\psi$ defined flight-path reference by taking advantage of the differential flatness property of the system, see \cite{mellinger_control}. 
We presume full state measurements, hence the collected dataset consists of the system inputs and the states of the dynamical system. Since the dynamics governing the evolution of the position states consist solely of integration, the identification process can be simplified by focusing only on \([ \ v^\top \ \ \zeta^\top \ \ \omega^\top\ ]^\top\). The size of the recorded datasets can be viewed in Table \ref{tab:datasets}.
For more details about the simulation and data collection procedure, the reader can refer to \cite{deep_learning_of_vehicle_dyn}. \par
In our study, we have investigated various dependencies of the input matrix \( B \) and emulated different levels of sensor noise. Additionally, we examined the effect of including the original motion states among the observables by setting \(C = [\ I \ \ 0 \ ]\) where \( I \) and \( 0 \) denote identity and zero matrices of appropriate dimensions. This modification aids the design of reference tracking controllers, as it allows the reference signal to be defined directly in terms of the original motion states.

To train the Koopman models, a prediction horizon of \(T=80\) is selected for $V^{\mathrm{batch}}_{\mathcal{D}_N}$, corresponding to 1.7 seconds of flight. This duration is sufficient as it significantly exceeds the largest time constants of a miniature quadrotor. During experiments, we found that a lifted state dimension of \(n_\mathrm{z}=40\) works best. Dimensions lower than this failed to adequately capture the system behavior, whereas dimensions higher than this led to overfitting, as evidenced by elevated NRMS values during model testing. The encoder $\bar\Pi^\eta_{\mathrm{z},0}$ and the input matrix function \(B_{\theta}\) are parametrized as deep neural networks with 2 hidden layers, 64 nodes, and tanh activation. With the availability of full-state measurements, the encoder only uses one measurement corresponding to the present timestep, and so it is simplified to be the lifting function. Parameter initialization is done identically to the previous experiments. For optimization, a batch size of 256 is selected, with a learning rate of \(\alpha=10^{-4}\) and exponential decay rates of \(\beta_1 = 0.9\) and \(\beta_2 = 0.999\). Additionally, an \(\ell_2\) parameter regularization is added to \eqref{eq:pred:enc} to prevent overfitting and we set the penalty coefficient to \(\lambda=0.5 \times 10^{-4}\).\par
Model performance across various \(n_\mathrm{z}\) values can be seen in Table \ref{tab:quad_nrms_errors} in terms of the 160-step NRMS errors for the trained models on the test data set. For comparison purposes, results on a full nonlinear model estimated by the SUBNET approach \cite{deep_learning_of_vehicle_dyn} are also included to indicate performance of a nonlinear model estimate without any restrictions on the network structure. In the network structure column, the subscripts \((\cdot)_\theta\), \((\cdot)_{\text{lin}}\), and \((\cdot)_{\mathbf{I}}\) denote the implementation of a function as a deep neural network, linear or identity layers. The superscript of \(A^{(\cdot)}\) denotes the lifted state-space dimension. The Koopman model achieves good performance in terms of the NRMS error 
only slightly exceeding that of the nonlinear SUBNET. Similar to the Bouc-Wen example, we found that the dependence of \(B\) on \(u\) slightly deteriorates the performance, compared to the input affine model structure. In the table, the effects of measurement noise can also be seen. Even at a level of 20 dB, the Koopman model remains capable of capturing the dynamics. Enforcing the original states among the observables only slightly decreases the simulation precision of the model, which we consider as a good tradeoff for the simpler model structure.

\renewcommand{\arraystretch}{1.1}
\begin{table}[tbp]
    \centering
\begin{tabular}{ | c | c | c || c | c | c | }
        \hline
        \multirow{2}{*}{Data} & \multirow{2}{*}{Network structure} & \multirow{2}{*}{Noise} & \multicolumn{1}{c|}{RMS} & \multicolumn{2}{c|}{NRMS} \\
        \cline{4-6}
        & & & Train & Validation & Test \\
        \hline
        \hline
        \multirow{8}{*}{\rotatebox{90}{Simulation}} & $A^{20}$, $B_\theta(z)$, $C_{\text{lin}}$ & None & 0.108 & 0.120 & 0.130 \\
        \cline{2-6}
        & $A^{40}$, $B_\theta(z)$, $C_{\text{lin}}$ & None & 0.061 & 0.069 & 0.079 \\
        \cline{2-6}
        & $A^{60}$, $B_\theta(z)$, $C_{\text{lin}}$ & None & 0.071 & 0.096 & 0.089 \\
        \cline{2-6}
        & $A^{40}$, $B_\theta(z)$, $C_{\text{lin}}$ & 25 dB & 0.106 & 0.120 & 0.089 \\
        \cline{2-6}
        & $A^{40}$, $B_\theta(z)$, $C_{\text{lin}}$ & 20 dB & 0.116 & 0.161 & 0.099 \\
        \cline{2-6}
        & $A^{40}$, $B_\theta(z, u)$, $C_{\text{lin}}$ & None & 0.074 & 0.090 & 0.087 \\
        \cline{2-6}
        & $A^{40}$, $B_\theta(z)$, $C_{\mathbf{I}}$ & None & 0.069 & 0.095 & 0.087 \\
        \cline{2-6}
        & General nonlinear & None & 0.061 & 0.042 & 0.055 \\
        \hline
        \hline
        \multirow{4}{*}{\rotatebox{90}{Real}} & $A^{40}$, $B_\theta(z)$, $C_{\text{lin}}$ & Sensor\footnotemark[5] & 0.163 & 0.213 & 0.228 \\
        \cline{2-6}
        & $A^{40}$, $B_\theta(z, u)$, $C_{\text{lin}}$ & Sensor & 0.151 & 0.199 & 0.217 \\
        \cline{2-6}
        & $A^{40}$, $B_\theta(z)$, $C_{\mathbf{I}}$ & Sensor & 0.140 & 0.207 & 0.216 \\
        \cline{2-6}
        & General nonlinear & Sensor & 0.142 & 0.200 & 0.216 \\
        \hline
    \end{tabular}
    \caption{Precision of various identified Koopman models in the quadrotor example. The used model structures and the level of added noise to the datasets are specified along with the required training time and NRMS errors w.r.t.~both simulation and experimental datasets.}\vspace{-.5cm}
    \label{tab:quad_nrms_errors}
    \vspace{-10pt}
\end{table}

\subsubsection{Experimental results}\label{sec:quad_real_exp}
The experimental setup consists of the Crazyflie 2.1 nano quadrotor, equipped with onboard sensors and a microcontroller, the OptiTrack motion capture system for accurate global position measurements, and a ground control PC. State estimation is done by an extended Kalman filter also performing sensor fusion of the various measurements.

Data collection is done at a control and sampling frequency of 200 Hz, which is higher than in the simulation environment, but necessary for the agile maneuvering of the real quadrotor. The size of the datasets are reported in Table \ref{tab:datasets}. For data collection and to conduct the experiments, we use the same reference paths as in the simulation environment, executed by the Mellinger controller \cite{mellinger_control} for reference tracking.
\footnotetext[5]{No artificial noise was added, the real dataset was only affected by the noise inherent to the inaccuracies of the various sensors equipped on the Crazyflie.}

\begin{figure}[htbp]
    \centering
    \begin{minipage}[b]{0.48\textwidth}
        \centering
        \captionsetup{type=table} 
        \begin{tabular}{ |@{\hspace{4pt}}c@{\hspace{4pt}}| |@{\hspace{4pt}}c@{\hspace{4pt}}| @{\hspace{4pt}}c@{\hspace{4pt}}|}
            \hline
            Dataset sizes & Synthetic & Real-world \\
            \hline
            Train & $123\,197$ & $228\,323$\\
            \hline
            Validation & $21\,740$ & $40\,292$\\
            \hline
            Test & $480$ & $3\,000$\\
            \hline
        \end{tabular}
        \caption{The sample size of train, validation, and test datasets (quadrotor example).}
        \label{tab:datasets}
    \end{minipage}%
    \hfill
    \begin{minipage}[b]{0.48\textwidth}
        \centering
        \includegraphics[width=\textwidth]{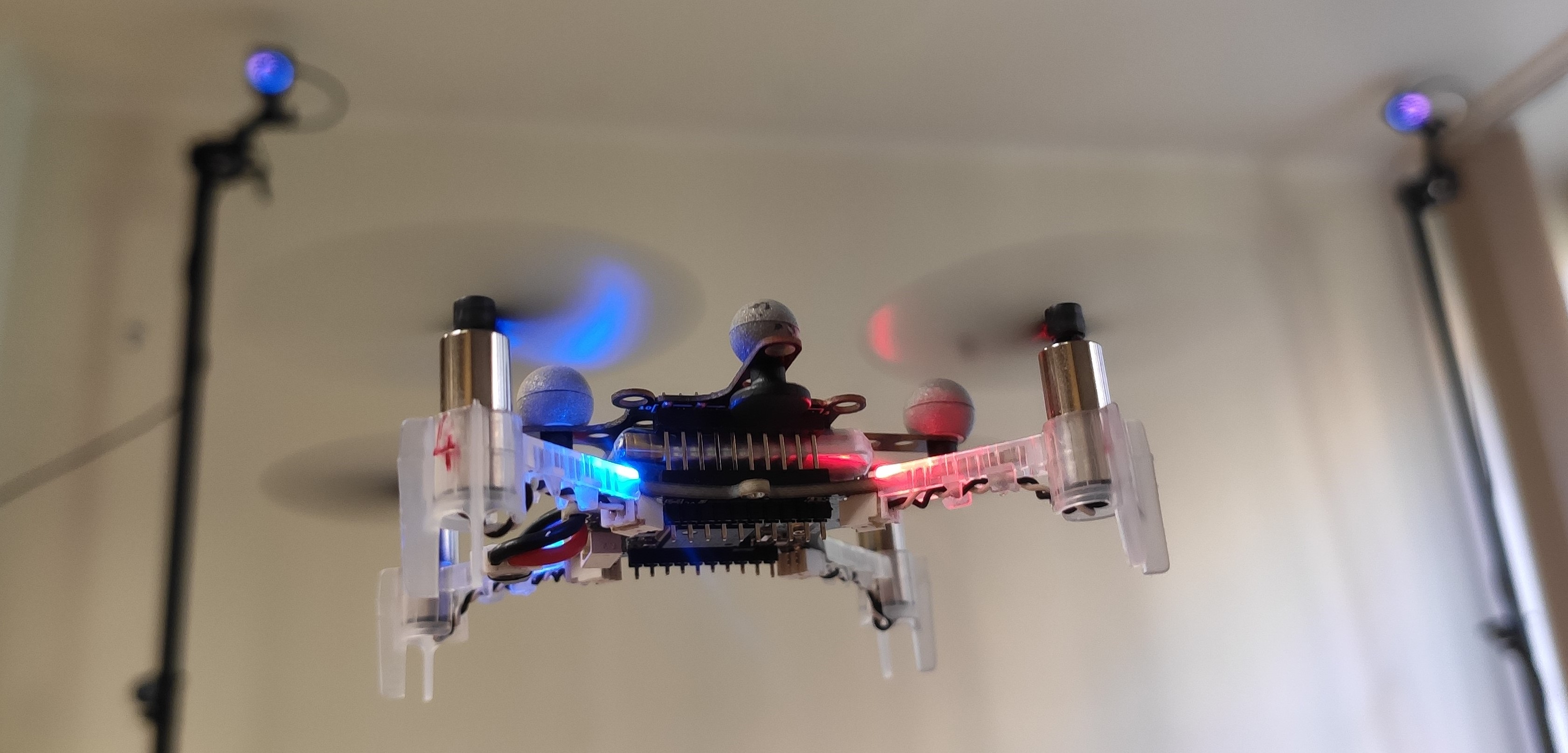}
        \caption{Crazyflie 2.1 during flight.}
        \label{fig:crazyflie}
    \end{minipage}
\end{figure}

The network structures that were found to perform the best in the simulation environment were used for training on the real dataset. The 200-step open loop test results can be viewed in Table \ref{tab:quad_nrms_errors}. The increase in NRMS values compared to simulation experiments may not fully represent the actual performance of the trained models. In open-loop simulations, deviations inevitably grow as errors accumulate over time and more critical is the resemblance between the shapes of the Koopman model outputs and the real system. Also, there are unmeasured deterministic disturbances such as airflow effects on the real quadcopter, which can not be fully captured by the considered noise model. 
As can be seen in Fig.~\ref{fig:Koopman_SUBNET_real_data} and Table \ref{tab:quad_nrms_errors}, the Koopman models closely match the performance of the general nonlinear SUBNET — in the identity output case they are identical — demonstrating that the Koopman models accurately capture the dynamics of the system.

\begin{figure}
    \centering
    \includegraphics[width=0.9\linewidth]{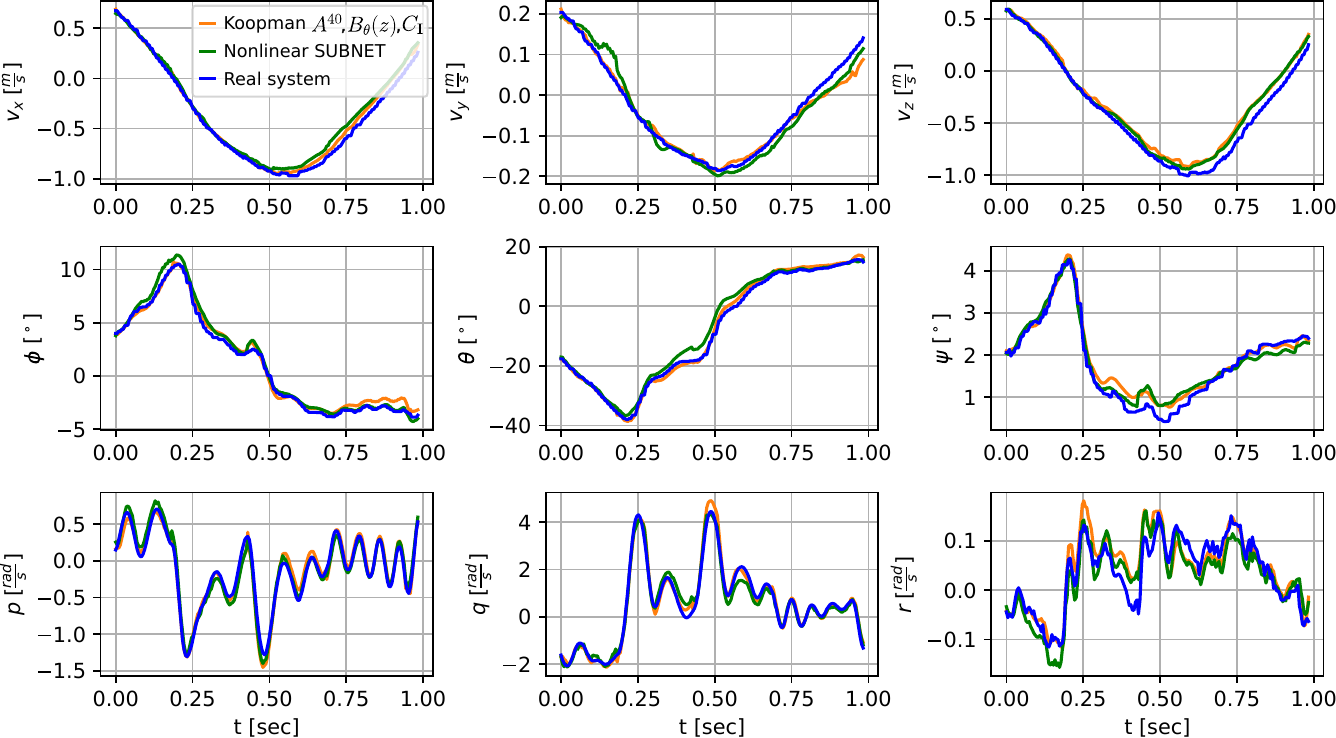}
    \caption{Simulation results of the estimated Koopman model and the estimated nonlinear SUBNET model w.r.t.~the measured flight data in the test data set.}\vspace{-.5cm}
    \label{fig:Koopman_SUBNET_real_data}
    \vspace{-4mm}
\end{figure}

\section{Conclusion}\label{sec:section_conclusion}
In this paper a deep-learning-based Koopman identification method for nonlinear systems driven by an external input and affected by process and measurement noise is proposed, which provides statistically consistent estimation of the underlying nonlinear dynamics. For this purpose, we have shown that under control inputs and innovation noise, the data-generating system can be written in a Koopman model form \revtwo{both in the finite and infinite dimensional cases}, which in turn can be used for formulating a one-step-ahead predictor. With the help of this predictor and under various levels of complexity in the parameterization of the input and innovation matrices,  it has been shown that we can formulate a computationally efficient multiple-shooting-based learning method that minimizes the mean squared prediction error of the model.  
To circumvent a priori heuristic choice of a dictionary of observables, a neural-network-based encoder is used for the lifting and state-estimation, which is consistent with the reconstructability map of the Koopman model. Compared to other learning-based Koopman identification methods, the proposed approach not only provides theoretical guarantees of consistency and a computationally efficient learning pipeline even in case when no direct state measurments are available, but it is also shown to successfully capture the underlying nonlinear  behavior in various examples, from identification benchmarks to real-world flight dynamics of a quadcopter. 
\appendix
\section{Proof of Theorem \ref{TH1}}\label{sec:appendix_proof_embedding}
To show that \eqref{eq:koop_model_structure} is an  exact embedding of \eqref{eq:data_gen}, we employ function factorization through the second \emph{fundamental theorem of calculus} (FTC), extending the approach in \cite{Aut_Iacob_inputs} to systems with process noise. Based on \eqref{eq:nl_sys_decomp}, we have the following decomposition: \vspace{-.2cm}
\begin{equation}
		x_{k+1}= f_\mathrm{d}(x_k,u_k,e_k)=f(x_k) + g(x_k,u_k)+d(x_k,u_k,e_k)\vspace{-.2cm}
\end{equation}
with $g(x_k,0)=0$ and $d(x_k,u_k,0)=0$. The proof is composed of three steps:

{\it Step 1: Embedding the autonomous part:}
Take $u_k=0$, $e_k=0$, implying $g(x_k,0)=0$ and $d(x_k,u_k,0)=0$ such that 
			$x_{k+1}=f(x_k)$.
		Then, based on Assumption \ref{assumption:exact_embedding_aut}, it holds that:\vspace{-.15cm}
		\begin{equation}\label{eq:appendix_aut_embedding}
			\Phi(x_{k+1})=\Phi(f(x_k))=A\Phi(x_k).\vspace{-.15cm}
		\end{equation}
        
{\it Step 2: Embedding the control input part:}  Take $e_k=0$, implying $d(x_k,u_k,0)=0$ such that
			$x_{k+1} = f(x_k) + g(x_k,u_k)$.
		Using the results in \cite{Aut_Iacob_inputs} and \eqref{eq:appendix_aut_embedding}, the exact lifted form of $x_{k+1} = f(x_k) + g(x_k,u_k)$ is: \vspace{.1cm}
		\begin{equation}\label{eq:appendix_lifting_no_noise}
			\Phi(x_{k+1})-\underbrace{\Phi(f(x_k))}_{A\Phi(x_k)}=\underbrace{\left(\int^1_0 \frac{\partial \Phi}{\partial x}(f(x_k)+\lambda \rev{g(x_k,u_k))}\dif \lambda\right)g(x_k,u_k)}_{\tilde B_\mathrm{x}(x_k,u_k)}.
		\end{equation}

        {\it Step 3: Embedding the noise part:}  
        For the full nonlinear dynamics described by:
	 	\begin{equation}\label{eq:expansion_nl_sys_with_noise}
			x_{k+1} = f(x_k) + g(x_k,u_k) + d(x_k,u_k,e_k),
		\end{equation}
        we can apply the proof of Theorem 2 in \cite{Aut_Iacob_inputs}: in Eq. (43) in \cite{Aut_Iacob_inputs}, choose $q_{k+1}=x_{k+1}$ (which is expanded as \eqref{eq:expansion_nl_sys_with_noise}) and $p_{k+1}=f(x_k)+g(x_k,u_k)$, giving an exact lifted form that includes the effect of noise as:
		\begin{multline}
			\Phi(x_{k+1})-\underbrace{\Phi(f(x_k)+g(x_k,u_k))}_{A\Phi(x_k)+\tilde{B}_\mathrm{x}(x_k,u_k)}=\\ =\underbrace{\left(\int^1_0\frac{\partial \Phi}{\partial x}(f(x_k)+g(x_k,u_k) + \lambda d(x_k,u_k,e_k))\dif \lambda\right)d(x_k,u_k,e_k)}_{\tilde{K}_\mathrm{x}(x_k,u_k,e_k)}.
		\end{multline}
By applying the exact factorization Lemma 1 in \cite{Aut_Iacob_inputs}, we get:
\begin{equation}
	\Phi(x_{k+1})=A\Phi(x_k) + B_\mathrm{x}(x_k,u_k)u_k + K_\mathrm{x}(x_k,u_k,e_k)e_k
\end{equation}
with
\begin{equation*}
	B_\mathrm{x}(x_k,u_k) = \int^1_0\frac{\partial \tilde{B}_\mathrm{x}}{\partial u}(x_k,\lambda u_k)\dif \lambda , \;\;K_\mathrm{x}(x_k,u_k,e_k)= \int^1_0\frac{\partial \tilde{K}_\mathrm{x}}{\partial e}(x_k, u_k, \lambda e_k)\dif \lambda.
\end{equation*}
Furthermore, let $z_k=\Phi(x_k)$ and $x_k=\Phi^\dagger(z_k)$, where $\dagger$ denotes the inverse. Then, considering that  Assumption \ref{assumption:exact_embedding_output_aut} holds, i.e., $h\in \text{span} \{\Phi\}$, an exact finite dimensional Koopman embedding of \eqref{eq:data_gen} is given by:
\begin{equation}
\begin{split}
	z_{k+1}&=Az_k + B(z_k,u_k)u_k + K(z_k,u_k,e_k)e_k\\
	y_k &= Cz_k + e_k
\end{split}
\end{equation}
with $B(z_k,u_k):=B_\mathrm{x}(\Phi^\dagger(z_k),u_k)$, $K(z_k,u_k,e_k):=K_\mathrm{x}(\Phi^\dagger(z_k),u_k,e_k)$.

    \revtwo{ \section{Proof of Lemma \ref{TH1}}\label{sec:appendix_inf_embedding}
    This result is actually an intermediary step to obtain \cref{TH1}. Instead of stacking observables into the dictionary $\Phi$ as detailed in \cref{sec:appendix_proof_embedding}, one can compute the dynamics per observable as described at length in \cite{Aut_Iacob_inputs} and apply the same 3-step procedure detailed in \cref{sec:appendix_proof_embedding}. The derivation leads to:
      \begin{equation}\label{eq:per_obs_dyna_calc_intermediary}
\begin{split}
			\phi_i(x_{k+1})=\phi_i(f(x_k))&+\left(\int^1_0\frac{\partial \phi_i}{\partial x}(f(x_k)+g(x_k,u_k))\dif \lambda\right)g(x_k,u_k)
            \\ &+\left(\int^1_0\frac{\partial \phi_i}{\partial x}(f(x_k)+g(x_k,u_k) + \lambda d(x_k,u_k,e_k))\dif \lambda\right)d(x_k,u_k,e_k)
            \end{split}
		\end{equation}
    and, using \eqref{eq:koop_composition}, the result directly follows. }

\bibliographystyle{siamplain}
\bibliography{siads_identification_references}

\end{document}